\documentclass[12pt]{article}
\usepackage[margin=1in]{geometry}
\usepackage{graphicx}
\usepackage{style/qstyle}

\title{Quantum interaction can superactivate cheating under parallel repetition}

\author[1]{Archishna Bhattacharyya\footnote{abhat086@uottawa.ca}}
\author[2]{Laura Man{\v{c}}inska\footnote{mancinska@math.ku.dk}}
\author[2]{Yuming Zhao\footnote{yuming@math.ku.dk}}
\affil[1]{Faculty of Science, University of Ottawa}
\affil[2]{QMATH, Department of Mathematical Sciences, University of Copenhagen}

\date{}

\numberwithin{equation}{section}
\numberwithin{figure}{section}

\begin{document}

\maketitle

\begin{abstract} 
We study interactive multiprover games and many-round protocols in which the communication between the verifier and the provers is quantum. Parallel repetition is known to suppress soundness error of classical two-prover games arbitrarily close to zero, as shown by Raz (STOC '95), even with quantum strategies as shown by Yuen (ICALP '16), and Bavarian, Vidick and Yuen (STOC '17). Yet, we show that quantum communication can have the opposite, unexpected effect. Specifically, we exhibit a one-round quantum game and two-round \(\mathsf{QMIP}\) protocols whose local (unentangled) value is strictly less than one for a single instance, yet equals one under \(n\)-fold parallel repetition for every \(n \ge 2\). The key is that parallel repetition not only imposes additional winning conditions but also supplies additional quantum resources as entanglement in the exchanged states can be exploited jointly across copies.

Our multiround examples build on superactivation of zero-error capacities of quantum channels. To establish this connection, we introduce quantum games and interactive protocols which capture the one-shot zero-error classical and quantum capacities, both with and without entanglement assistance. Our constructions use quantum-state verification and a teleportation-based reduction from two rounds to one.

In contrast, when shared entanglement is allowed, we show that parallel repetition cannot increase the entangled value when at most $3$ messages are exchanged, consistent with classical results of Bellare, Impagliazzo and Naor (FOCS '97), and that of single-prover systems by Kitaev and Watrous (STOC '00). Combining this monotonicity with our quantum game–channel correspondence, we show that entanglement-assisted zero-error classical and quantum capacities cannot be superactivated. 
\end{abstract}

\tableofcontents

\section{Introduction}

The interaction between quantum resources is fundamentally distinct from its classical counterparts. For instance, two different quantum channels may individually transmit zero bits of information, i.e., have zero \emph{capacity}\footnote{Informally, the capacity of a channel is the maximum amount of information it can send per use.}. Yet, it is possible that the combined channel is able to transmit a strictly nontrivial amount of information — an effect known as \emph{superactivation} of quantum channel capacities \cite{SY08}. On the other hand, there exist nonlocal games \cite{CHTW04} that can exhibit a strict separation between its classical and quantum values — such as the CHSH game \cite{CHSH69}. Entanglement is one such quantum resource that underpins much of this nontrivial interaction. Quantum channels as well as nonlocal games are both central to studying classically-impossible correlations and their role in information-processing, computing and cryptography.

Typically, in a two-player game with one-round of communication, or more generally in an interactive proof system \cite{GMR89} with many rounds, denoted by $G$: a referee samples a pair of questions $(x, y)$ from some probability distribution $\mu,$ sends $x$ to Alice and $y$ to Bob. Alice and Bob cannot communicate during the game. Alice responds with answer $a$ and Bob responds with answer $b,$ and the referee checks if $V(x, y, a, b) = 1$ for some predicate $V.$ If so, the players win $G.$ When the players are classical, their maximum success probability is called the classical value, denoted by $\omega(G).$ Concerned with the study of correlations, the question of what happens to $\omega(G)$ when the referee plays $n$ independent instances of $G$ in parallel with the two players, known as the $n$-fold parallel repetition of $G$, denoted by $G^n$ arises. Raz's parallel repetition theorem \cite{Raz95,Raz98} answers this as follows: if $\omega(G) < 1$ in a single instance of playing $G$, then the players' probability of winning all $n$ instances of the repeated game $G^n$ is exponentially small in $n.$ The result is intuitive; to win $G^{n}$, the provers must win every one of the $n$ parallel instances, so the repeated game seems strictly harder than a single copy.\footnote{This should be contrasted with quantum hedging phenomena \cite{MW12, AMR13}, where the objective is typically to win at least one of several repetitions rather than all of them.} 

\paragraph{Our contributions} Intriguingly, if the interaction between the referee and the players is quantum, that is, the questions and/or answers exchanged between the referee and players are now quantum states, we show that parallel repetition and the intuition it captures fails in the strongest possible way. Precisely, for a \textit{quantum game} $G$, we show a counterexample in \cref{thm:foul-play} where 
\begin{equation} \label{eq:counter-phenom}
\omega(G)<1 \qquad \text{but} \qquad \omega\left(G^{n}\right)=1, \qquad \text{for all}~ n \geq 2.
\end{equation}
Our findings are not merely restricted to one-round games, but more generally, we find classes of quantum multiprover interactive ($\mathsf{QMIP}$) protocols with at least $2$ nontrivial rounds of interaction where the same superactivation phenomenon is exhibited by the local (unentangled) value of the protocol, as shown in \cref{thm:yipee}, \cref{cor:all-n-yay}, and \cref{thm:phenom-quant}. Such an extreme violation of parallel repetition for any game or protocol is not known to date. In fact, \cite{CJPPG15} shows that parallel repetition is \textit{only approximately} satisfied for rank one quantum games, a much weaker consequence in comparison.

To unravel superactivation in quantum games, we establish a one-to-one quantum game-channel correspondence in \cref{thm:ch-pr-cor} and \cref{thm:Game-channel-correspondence}. Our constructions exploit the superactivation of channel capacities by introducing quantum games and interactive protocols that capture the one-shot zero-error entanglement-assisted classical and quantum capacities. Our result is a quantisation of the previously known correspondence between classical channels and (classical) nonlocal games \cite{CLMW10}.

Finally, we show that superactivated cheating as captured in \cref{eq:counter-phenom} is not possible for the entangled value, $\omega^{\ast}(G)$ of many-round $\mathsf{QMIP}$ protocols (\cref{thm:mult-imposs}) with at most $3$-messages exchanged. We also show that this is true for one-round quantum games (\cref{thm:imposs-q-game}). Combined with our quantum game-channel correspondence, we ultimately establish that the entanglement-assisted zero-error classical and quantum capacities cannot be superactivated, in \cref{thm:C-imp-sup} and \cref{cor:Q-imp-sup}.

\paragraph{Consequences of our results} An explanation of superactivated cheating of the local value occurring distinctly in \emph{quantum} games or protocols is that parallel repetition does not just impose additional winning constraints on the provers; it also provides them with additional quantum resources. Although the requirement of winning every instance becomes more stringent, the provers simultaneously gain access to more quantum states, including the entanglement contained in those states. This additional entanglement can be exploited across repetitions and may enable strategies that are impossible in a single instance. Thus two copies of the game can be won perfectly even though a single copy cannot. In our constructions, the extra copy of the question state provides precisely the resource needed to turn an imperfect one-copy strategy into a perfect strategy for the repeated game. The same follows through to $3$ repetitions, and consequently, parallel repetition need not reduce the value at all for any $n \geq 2.$

Since the failure of parallel repetition demonstrated also holds for $\mathsf{QMIP}$ protocols with $2$ nontrivial rounds, this shows yet another concrete separation between the complexity classes $\mathsf{QMIP}$ and $\mathsf{QMIP}^{\ast}.$ Notably, our result showing the impossibility of superactivated cheating for the entangled value of $3$-message $\mathsf{QMIP}$ protocols (\cref{thm:mult-imposs}) is a cryptographically relevant model, capturing a \emph{security game}. There is evidence showing that it may not be possible to extend our impossibility result to a larger number of rounds; as it is shown in \cite{BQSY24} that a computationally bounded 4-message $\mathsf{QMIP}$ protocol already faces barriers to parallel repetition. Our result in contrast is information-theoretic. Interestingly, \cite{BQSY24} also shows that parallel repetition is \textit{respected} by $\mathsf{QIP}$ (single prover) protocols in contrast with our counterexample \cref{eq:counter-phenom}. Most recently, \cite{LV26} reaffirms this while reporting that perfect completeness can be achieved for fewer than $3$-messages.

\paragraph{Cryptographic relevance} Another pertinent application of parallel repetition is error reduction. Often, noise in a protocol amounts to an adversary possessing side information. In such cryptographic contexts, a failure or violation of parallel repetition in addition to not preserving soundness will lead to uncontrolled noise amplification. This puts at risk adversarial tasks dependent on concurrent execution in interactive proof systems \cite{BIN97,KW00}. The extent to which a dishonest individual can attack such cryptographic protocols correlating independent executions is an important security consideration. For instance, already due to quantum hedging \cite{MW12}, where only a subset of the parallel-repeated instances are required to be won rather than all of them, in contrast to our case; attacks have been discovered for quantum money \cite{MVW13}. Our result is a stronger failure of parallel repetition, due to the possibility of superactivated cheating in any ($n \geq 2$) number of rounds, and thus we expect stronger cryptographic consequences. In the case of parallel repetition for a nonlocal game $K$ (with classical questions and answers), breaking a given cryptosystem is reduced to the task of distinguishing between $\omega(K) = 1$ and $\omega(K) < \delta$ for a certain $K.$ Parallel repetition can be employed to argue that by taking $H = \Gamma^n$ for some game $\Gamma$ this task is as hard as that of distinguishing between $\omega(\Gamma) = 1$ and $\omega(\Gamma) < 1 - \varepsilon_1$ when $n \geq \mathrm{poly}(\varepsilon^{-1}_1, \log \delta^{-1})$. Our findings pose an obstacle to such gap amplification techniques enhancing the security of protocols involving quantum interaction.

\paragraph{Connection with quantum PCP conjectures} A deeper interpretation of this result lies in the \textit{quantum PCP conjecture}. Classically, the main application of Raz's parallel repetition theorem is to the areas of \emph{hardness of approximation} and \emph{probabilistically checkable proofs (PCP)}. More precisely, parallel repetition is a fundamental tool for amplifying soundness of games and protocols. The PCP theorem is a cornerstone of complexity theory, and closely connected to the fact that the complexity class $\mathsf{MIP}$ is equal to $\mathsf{NEXP}.$ The PCP theorem can be formulated in terms of two-player games: there exists an $\varepsilon > 0$ such that it is $\mathsf{NP}$-hard to approximate the classical value of a game $G$ within additive error of $\varepsilon.$ Parallel repetition then gives a blackbox way to amplify this inapproximability, which lets us conclude that: for any $\varepsilon_0 > 0$ it is $\mathsf{NP}$-hard to approximate the classical value of a game $G$, within an additive error of $1 - \varepsilon_0.$ Since the value of a game is a number between $0$ and $1$, this implies strong inapproximability for games. In this landscape of inapproximability, the existence of a quantum version of the PCP theorem, and its precise proveable form has withstood long-standing interest. There are two proposed versions: the local Hamiltonian version asks whether it is $\mathsf{QMA}$-hard to determine the minimum eigenvalue of a local Hamiltonian with a constant energy gap, while the games version asks if there is an efficient $\mathsf{MIP}^{\ast}$ protocol for $\mathsf{QMA}$. Yet, our finding as stated in \cref{eq:counter-phenom} completely removes the possibility of using parallel repetition to amplify soundness, even just for a single repetition of a quantum game. 

In the quantum setting, the study of inapproximability of games displays richer behaviour, once again due to entanglement. Namely, there exists an \emph{entangled} value of games, denoted by $\omega^{\ast}(G)$ for a game $G.$ The entangled value is the maximum success probability, when the players are allowed to share unbounded entanglement as part of their strategy in the form of an arbitrary preshared entangled state. Recent landmark results \cite{Slo19} and \cite{JNV+21} respectively show that both computing and approximating the entangled value of a nonlocal game is $\mathsf{RE}$-hard or undecidable. The failure of parallel repetition, an ingredient that has been instrumental in \cite{JNV+21}, for quantum games is quite unprecedented. Our results pose an immediate obstacle to gap amplification when the interaction is quantum between provers and verifiers even with unentangled strategies. Gap amplification already for either version of the quantum PCP theorem has its challenges \cite{NN24a}. Our results add to this repertoire capturing that quantum games admit separations between the local value and the entangled value in terms of scaling; making the landscape of gap amplification for general games and protocols more nuanced. We suspect that this could influence the quantum PCP conjectures or at least the path to their proof of existence. 

\paragraph{Technical overview: bridging proof systems and channel capacities} Our results emerge from the fact that the winning probability is superactivated with repetitions of the game leading to the failure of parallel repetition. Estimating the rate of transmission through a quantum channel, $\mc{N}$ over many parallel uses, denoted by $\mc{N}^{\otimes n}$ is qualitatively the same as playing $n$ independent instances of a game in parallel, $G^n$ and analysing its value. The exact connection is in terms of resources and how they interact with each other. Entanglement indeed is a cause for a separation between how classical and quantum resources interact. For instance, it is enough to know that the input and output of a classical channel are uncorrelated to ascertain that the channel has zero capacity. However, we do not know a concrete answer to this question in the quantum context. Similarly, this extends to the compounding of resources, where for classical channels, the capacity of two channels is simply the sum of their individual capacities. Whereas, for quantum channels, and especially the quantum capacity, there may be strict separations as evidenced by superactivation \cite{SY08}. Over the years this has been extended to other capacities, namely zero-error classical and quantum capacities both in the one-shot and asymptotic setting \cite{Dua09, CS12, SS15}, and the finite-length setting of the finite-error quantum capacity \cite{PBDW26}. The reason for such extreme nonadditivity arises from the fact that the state across $n$ independent uses of a channel can be arbitrarily entangled. 

In our constructions, superactivation in channels is exploited to capture the extreme failure of parallel repetition in quantum games and protocols. First, given a quantum channel $\mc{N}$, we introduce classes of $\mathsf{QMIP}$ protocols $G, H$ each with $2$ non trivial rounds of interaction that models zero-error classical and quantum communication (see \cref{def:cl-cap-2ro} and \cref{def:Hseq}). Then we establish a one-to-one mapping from the zero-error classical and quantum capacity to the classical value of $G$ and $H$, and from their entanglement-assisted counterparts to the respective entangled values (see \cref{thm:ch-pr-cor} and \cref{thm:Game-channel-correspondence}). Intuitively, perfectly winning in each case corresponds to achieving full capacity. Once this is established we show that playing the repeated game in parallel is equivalent to receiving twice the amount of information through the joint channel from two uses. At this point we introduce a direct sum construction (\cref{def:OTP-dir-sum}) for channels such that any possible correlations between two channel outputs is broken by the direct sum of the two channels, which we call the \emph{one time pad} direct sum. This enables us then to establish that a failure of perfect execution in one instance is mitigated by superactivation of the same resources when interacting over multiple rounds, leading to the failure of parallel repetition. Furthermore, superactivated cheating for the local value of a one-round quantum game is established by considering the same principle for perfect state conversion under local operations and shared randomness.

Finally, we also explore the limitation of exploiting non trivial interaction among quantum resources to constrain entanglement. Already, whether the possibility of sharing unbounded entanglement allows the players to cheat under parallel repetition of the entangled value of a (classical) game is ruled out by Yuen's quantum parallel repetition theorem \cite{Yuen16}. We show that it is impossible for the entangled value of a single-round quantum game, and that of a $\mathsf{QMIP}$ protocol with at most $3$-messages exchanged ($1.5$ rounds) to superactivate (\cref{thm:mult-imposs}). This respects the intuition that when provers already have unbounded entanglement to share, they cannot use additional entanglement from question states to enhance the winning probability in concurrent instances. Furthermore, via teleportation, we convert our channel to game correspondence from multiple rounds to a single round (\cref{def:game-clas}, \cref{thm:game-prot-val}, and \cref{def:game-qu}, \cref{thm:ent-val-quant}), and using this fact show that it is impossible to superactivate any channel capacity asymptotically, if its one-shot version cannot be superactivated (\cref{thm:1-shot-superactivation-to-asymptotic}). As an application we show the impossibility of superactivating entanglement-assisted zero-error classical capacities (\cref{thm:C-imp-sup}), and the asymptotic case for its quantum analogue (\cref{cor:Q-imp-sup}), in stark contrast to their unassisted counterparts \cite{CS12}, of which no proof was known to exist to the best of our knowledge. 

\paragraph{Future directions}  In the context of soundness amplification or error reduction, one may ask: how can we circumvent this extreme failure of parallel repetition and still construct \emph{safe} quantum games? For classical games, the \emph{anchoring} transformation \cite{BVY17} achieves this, compensating for the lack of general strong parallel repetition\footnote{Recently, an exponential parallel repetition theorem for the entangled value of nonlocal (classical) games was put forward by Open AI in an AI-formulated document with other concurrent results, not available on a scientific preprint server, or peer-reviewed.}. Therefore, it is natural to ask if there is an analogue for quantum games. A trivial criterion to prevent cheating via superactivation is to ensure that the question state is separable. However, any non trivial criterion would be of immediate interest. Finally, whether superactivation of $\omega^*(G)$ is possible for protocols with a large number of rounds, or at least beyond $3$-messages is another interesting question.

\paragraph{Acknowledgements} AB thanks Anne Broadbent, Richard Cleve, Eric Culf, Debbie Leung, William Slofstra and Graeme Smith for helpful discussions on nonlocal games, channel capacities and cryptographic implications of parallel repetition. Part of this work was conducted during a research visit by AB at Institute for Quantum Computing, Waterloo; and she thanks Debbie Leung for hosting her. The authors thank Institut Mittag-Leffler and the programme, Operator Algebras and Quantum Information 2026 for hospitality during which some of the early discussions conceiving this work took place. AB acknowledges the support of the Natural Sciences and Engineering Research Council of Canada (NSERC)(ALLRP-578455-2022, RGPIN-2022-05167), the Air Force Office of Scientific Research under award number FA9550-20-1-0375, the Canada Research Chairs Program (CRC-2023-00173), and a QUORUM Travel Grant. LM and YZ are supported by an ERC grant (QInteract, Grant No 101078107).  

{\bf AI use.} We used an LLM to identify probability vectors with the properties required for the proof of \cref{thm:foul-play}.

\section{Preliminaries}

\paragraph{Notation} For $n\in\N$, write $[n]=\{1,2,\ldots,n\}$. For a function $f:\N\rightarrow[0,1]$, write $f:\N\rightarrow(0,1)_{\exp}$ to mean that there exist $N, k>0$ such that $2^{-n^k}<f(n)<1-2^{-n^k}$ for all $n\geq N$. We write $\log$ for the base-$2$ logarithm. We denote registers by uppercase Latin letters $A,B,C,\ldots$; and we denote Hilbert spaces by uppercase script letters $\mc{H},\mc{K},\mc{L},\ldots$. We always assume registers are finite sets and Hilbert spaces are finite-dimensional. We denote an independent copy of a register $A$ by $A'$. Given a register $A$, the Hilbert space spanned by $A$ is $\mc{H}_A=\spn\!\!\set*{\ket{a}}{a\in A}\cong\C^{|A|}$. We indicate that an operator or vector is on register $A$ with a subscript $A$, omitting when clear from context. Given two registers $A$ and $B$, we write $AB$ for their cartesian product, and treat the isomorphism $\mc{H}_{AB}\cong\mc{H}_A\otimes\mc{H}_B$ implicitly. Given finite-dimensional Hilbert spaces $\mc{H}$ and $\mc{K}$, we write $B(\mc{H},\mc{K})$ for the set of all linear operators $\mc{H}\rightarrow\mc{K}$, $B(\mc{H})=B(\mc{H},\mc{H})$, $\mc{U}(\mc{H})\subseteq B(\mc{H})$ for the subset of unitary operators, and $D(\mc{H})\subseteq B(\mc{H})$ for the subset of density operators where $D(\mc{H}) \coloneqq \{\rho \in B(\mc{H}), \rho \geq 0, \Tr{\rho} = 1\}$. 

An operator $A \in B(\mc{H})$ is positive, denoted by $A \geq 0$, if $A = \sqrt{A^{\dagger}A}$, where $(\cdot)^{\dagger}$ represents the Hermitian conjugate. We denote by $\text{id}_R$ the identity map on $B(\mc{H}_R)$. We write $\Tr$ for the trace on $B(\mc{H})$. On $B(\mc{H}_{AB})$, we write the partial trace $\Tr_{B}=\id\otimes\Tr$. For $\rho_{AB}\in B(\mc{H}_{AB})$, write $\rho_A=\Tr_B(\rho_{AB})$. We denote the $L_1$-norm by $\norm{\cdot}_1$ and the trace norm by $\frac{1}{2}\norm{\cdot}_1$. We denote the operator norm by $\norm{\cdot}$. The identity operator is denoted by $I$. We denote the canonical maximally-entangled state $\ket{\phi^+}_{AA'}=\frac{1}{\sqrt{|A|}}\sum_{a\in A}\ket{a}\otimes\ket{a}\in\mc{H}_{AA'}$. We write the maximally-mixed state on a register $A$ as $\omega_A=\frac{1}{|A|}\sum_{a\in A}\ketbra{a}\in D(\mc{H}_A)$. A positive-operator-valued measurement (POVM) is a finite set of positive operators $\{P_i\}_{i\in I}$ such that $\sum_iP_i=\mds{1}$, and a projection-valued measurement (PVM) is a POVM where all the elements are projectors. For a complex-valued random variable $X$, we write its expectation as $\expec X=\expec_XX$, and its variance as $\varsigma_X^2=\expec |X|^2-\abs*{\expec X}^2$.

A quantum channel is a completely positive trace-preserving (CPTP) map $\Phi:B(\mc{H})\rightarrow B(\mc{K})$. We denote the Choi-Jamio\l{}kowski isomorphism $J:B(B(\mc{H}_A),B(\mc{H}_B))\rightarrow B(\mc{H}_{AB})$, $J(\Phi)=(\id\otimes\Phi)(\ketbra{\phi^+}_{AA'})$. Note that if is $\Phi$ is a quantum channel, $J(\Phi)\in D(\mc{H}_{AB})$, called the Choi state. The Kraus representation of a quantum channel $\Phi: B(\mc{H}) \to B(\mc{H})$ is $\sum \limits_{i = 0}^{d - 1} \Phi(\rho) = A_i \rho A_i^{\dagger}$ where $A_i \in B(\mc{H})$ are the Kraus operators such that $\sum \limits_{i = 0 }^{d - 1} A_i^{\dagger} A_i = I,$ and $d = \text{dim}(\mc{H})$. For $\Phi: B(\mc{H}_A) \to B(\mc{H}_B)$, by Stinespring's dilation theorem, there exists an isometry $V: A \to BE$ known as the Stinespring isometry such that $\Phi (\rho) = \Tr_E (V \rho V^{\dagger})$. The complementary channel $\Phi^c: B(\mc{H}_A) \to B(\mc{H}_E)$ is given by $\Phi^c (\rho) = \Tr_B (V \rho V^{\dagger}).$ Note that the Kraus representation is not unique. However, for a channel $\Phi$ we can always find its minimal Stinespring representation and corresponding minimal Kraus representation. Every Stinespring representation is equivalent to a minimal one via an isometry on the environment denoted by register $E$.

\subsection{Quantum games and interactive protocols}

A (single-round) quantum two-party game is specified by a tripartite question state
$\ket{\psi}_{ABR} \in \mathcal{H}_A \otimes \mathcal{H}_B \otimes \mathcal{H}_R,$
and an accepting projection
    \[
        \Pi_{\mathrm{acc}} \in \mathrm{Proj}\bigl(
        \mathcal{H}_a \otimes \mathcal{H}_b \otimes \mathcal{H}_R
        \bigr).
    \]
Note that the question spaces $\mathcal{H}_A$ and $\mathcal{H}_B$ need not be the same as the answer spaces $\mathcal{H}_a$ and $\mathcal{H}_b$.

The game is played as follows:
\begin{enumerate}
    \item The referee prepares the question state $\ket{\psi}_{ABR}$, sends register $A$ to Alice and register $B$ to Bob, and keeps register $R$.
    
    \item Alice and Bob apply their respective strategies to the registers they receive, returning the outputs of local quantum channels
\[
    \mathcal{E} : \mathrm{L}(\mathcal{H}_{A'}) \to \mathrm{L}(\mathcal{H}_a),
    \qquad
    \mathcal{F} : \mathrm{L}(\mathcal{H}_{B'}) \to \mathrm{L}(\mathcal{H}_b),
\]
where
$\mathcal{H}_{A'} = \mathcal{H}_A$ and
$\mathcal{H}_{B'} = \mathcal{H}_B$ if shared entanglement is not allowed. If it is allowed, then
$\mathcal{H}_{A'} = \mathcal{H}_A \otimes \mathcal{H}_{E_A}$ and
$\mathcal{H}_{B'} = \mathcal{H}_B \otimes \mathcal{H}_{E_B}$, where
$\mathcal{H}_{E_A}$ and $\mathcal{H}_{E_B}$ contain Alice's and Bob's shares
of the entangled state.
    
    \item Upon receiving the answer registers, the verifier performs the two-outcome measurement
    \[
        \bigl\{ \Pi_{\mathrm{acc}},\ I - \Pi_{\mathrm{acc}} \bigr\}
    \]
    on the joint system $abR$. Alice and Bob win if the outcome corresponding to $\Pi_{\mathrm{acc}}$ occurs.
\end{enumerate}

A strategy $S$ for $G=(\ket{\psi},\Pi_{\text{acc}})$ specifies the local operations performed by Alice and Bob. A local strategy $S=(\mc{E},\mc{F})$ consists of quantum channels $\mc{E}:L(\mc{H}_{A'})\rightarrow L(\mc{H}_a)$ and $\mc{F}:L(\mc{H}_{B'})\rightarrow L(\mc{H}_b)$. Its winning probability is given by
\begin{equation*}
    \omega(G;S)= \Tr \big(\Pi_{\text{acc}}(\mc{E}\otimes\mc{F}\otimes\id_R) (\ketbra{\psi})   \big)
\end{equation*}
An entanglement-assisted strategy $S=(\mc{E},\mc{F},\ket{\varphi}_{E_AE_B})$ also includes a shared state $\ket{\varphi}_{E_AE_B}$. The winning probability is given by
\begin{equation*}
    \omega(G;S)= \Tr \big(\Pi_{\text{acc}}(\mc{E}\otimes\mc{F}\otimes\id_R) (\ketbra{\psi}\otimes\ketbra{\varphi})\big).
\end{equation*}
The local value and entanglement-assisted value of $G$ are then defined by
\begin{equation*}
    \omega(G):=\sup\{\omega(G;S): S \text{ is a local strategy} \}
\end{equation*}
and
\begin{equation*}
    \omega^*(G):=\sup\{\omega(G;S): S \text{ is an entanglement-assisted strategy} \}.
\end{equation*}

For a fixed quantum game, an unassisted strategy consists of two channels between fixed finite-dimensional spaces. Using their Choi matrices, the set of such channels is compact, and $S\mapsto \omega(G;S)$ is continuous. Therefore $\omega(G)$ is always attained by some strategy, and one can replace $\sup$ with $\max$. However, the analogous conclusion need not hold for the entanglement-assisted value, because the players may share an entangled state of arbitrarily large dimension. In fact, there exists a nonlocal game with quantum value $1$ but does not have a perfect finite-dimensional strategy~\cite{Slo19}. Viewing this game as a quantum game, we conclude that the supremum in the definition of $\omega^*(G)$ cannot in general be replaced by a maximum.

A quantum multiprover interactive proof ($\mathsf{QMIP}$) is a generalisation of an interactive quantum game to multiple rounds with multiple provers. A round of communication consists of two turns of exchange between the verifier and provers. Without loss of generality, all of the verifier and prover actions in a multi-prover interactive protocol are represented by isometric channels acting on pure states, or unitary channels provided that sufficiently large ancillary spaces are made available for each participant at the start of the game. The initial state of the $k$ provers’ private registers $P_0^1, \ldots, P_0^k)$ play a particularly important role in $\mathsf{QMIP}$ protocols as they might benefit from shared starting states (especially entangled states) that cannot be prepared locally. Provers whose private registers are initialised to a product state, which could be prepared locally and independently by each prover, are referred to as \emph{unentangled} provers. General provers, on the other hand, are permitted to start the game with the collection of registers $(P_0^1, \ldots, P_0^k)$ initialised in an arbitrary quantum state. Such provers are typically be called \emph{entangled} provers, and the shared starting state is referred to as their prior shared entanglement. The unentangled or local value of a $\mathsf{QMIP}$ protocol, $G$, denoted by $\omega(G)$ is the highest probability with which the verifier can be made to output $1$ when interacting with provers whose private registers are all initialised to the all zero product state $\ket{0 \cdots 0}.$ The entangled value $\omega^{\ast}(G)$ is defined as the supremum over all finite-dimensional Hilbert spaces $\mc{H}_0^1, \ldots, \mc{H}_0^k$ corresponding to the provers’ initial private registers $P_0^1, \ldots, P_0^k$ and all initial pure states $\ket{\psi} \in \mc{H}_0^1 \otimes \cdots \otimes \mc{H}_0^k$ of these registers, of the provers’ maximum probability of causing the verifier to output $1.$ These two values lead to potentially distinct classes of problems having multi-prover interactive proof systems: $\mathsf{QMIP}$ for the case of
unentangled provers and $\mathsf{QMIP}^{\ast}$ when the provers are allowed to share arbitrary entangled states.

\paragraph{Parallel repetition.} Given two games $G=(\ket{\psi},\Pi_{\text{acc}})$ and $G'=(\ket{\psi'},\Pi'_{\text{acc}})$ their parallel repetition is a quantum game $G\times G' := (\ket{\psi}\otimes\ket{\psi'},\Pi_{\text{acc}}\otimes\Pi'_{\text{acc}})$. We can, of course, also take $G=G'$ as well as consider $k$-fold parallel repetition.

For parallel repetition with a two-prover, multi-round $\mathsf{QMIP}$ protocol, the verifier acts as $n$ independent copies of that round for one instance. Most generally, the verifier's action can be modelled as a quantum channel, $\mc{M}$ say, and under $n$ instances of parallel repetition, he will act as $\mc{M}^{\otimes n}.$

\subsection{Quantum channel capacities}

A quantum channel can send both classical and quantum data. Informally, the maximum number of classical or quantum bits of information over the number channel use defines the quantum or classical capacity of a quantum channel. Depending on the type of data sent, the resources used in transmission and whether the transmission is perfect, there exist many varieties of capacities for quantum channels. Let $\mc{N}$ be a quantum channel and denote by $X(\mc{N})$ a general capacity of $\mc{N}.$ For simplicity, we consider sending qubits, but it holds more generally for arbitrary dimensional channels.

\begin{definition}
    $X(\mc{N}) \coloneqq \lim\limits_{\epsilon \to 0} \lim\limits_{n \to \infty} \sup \{ \frac{m}{n} : \exists ~{\mc{E}}~ \exists~ \mc{D}~ \forall~ {\psi \in \Gamma_m}, F(\mc{D} \circ \mc{N}^{\otimes n} \circ \mc{E}(\psi), \psi) > 1 - \epsilon \},$
    where $\mc{E}$ is an encoding subprotocol to be performed by the sender, Alice who receives an $m$-qubit state $\psi$ belonging to some set $\Gamma_m$ of permissible inputs to the entire protocol, and produces $n$ possibly entangled inputs to the channel $\mc{N}$; $\mc{D}$ is a decoding subprotocol performed by the receiver, Bob who receives $n$ possibly entangled channel outputs and produces an $m$-qubit output for the entire protocol. $F(\cdot, \cdot)$ denotes the fidelity of the output relative to the input $\psi$, i.e., the probability that the output state would pass a test determining whether it is equal to the input. 
\end{definition}
In the above, when $\epsilon = 0$ this corresponds to perfect transmission and introduces the notion of zero-error capacities. Also, the notion so far is defined in the asymptotic limit, $n \to \infty$. One can also consider the supremum over all achievable rates for a single use of the channel, and that refers to the one-shot capacity, denoted by $X^{(1)}(\mc{N})$. If Alice and Bob preshare an entangled state prior to running their subprotocols, this corresponds to the notion of entanglement-assisted capacities. Often, it so happens that an entropic quantity characterises the one-shot capacity of a channel, and to obtain the asymptotic capacity, a regularisation is needed. More clearly, $X(\mc{N}) \coloneqq \lim\limits_{n \to \infty} \frac{X^{(1)}(\mc{N}^{\otimes n})}{n}.$ For instance, the quantum capacity of a channel exhibits this behaviour as established by the quantum capacity theorem due to \cite{Lloyd97,Shor02,Devetak05}.

Certain channel capacities also display a peculiar behaviour known as superactivation: there exist channels $\mc{N}_1, \mc{N}_2$ such that a certain capacity $X(\mc{N}_1) = X(\mc{N}_2) = 0$ but $X(\mc{N}_1 \otimes \mc{N}_2) > 0.$ This was first discovered for the quantum capacity by Smith and Yard \cite{SY08} and then extended to various other cases, some of which are essential for our main results as will be observed in later sections. Two capacities, the one-shot entanglement-assisted zero-error classical and quantum capacities are particularly useful for our analysis denoted as $C_{0,E}^{(1)}(\mc{N})$ and $Q_{0,E}^{(1)}(\mc{N})$, and we define them as follows.

\begin{definition}[Entanglement-assisted one-shot zero-error classical capacity]
    The entanglement-assisted one-shot zero-error classical capacity of a CPTP map $\Phi: B(\mc{H}_A) \to B(\mc{H}_B)$ is defined as
\begin{equation}
    C^{(1)}_{0,E}(\Phi) \coloneqq \sup \limits_{\ket{\psi}, M} \log \lvert M \rvert, 
\end{equation}
where the supremum is over all pure bipartite (entangled) states $\ket{\psi} \in D(\mc{H}_{{A_0}} \otimes \mc{H}_{{B_0}})$ and collections $M$ of CPTP maps $\{\mc{E}_m : B(\mc{H}_{A_0}) \to B(\mc{H}_A)\}_{m = 1}^{|M|}$ such that $$\forall~ m \neq m' : (\Phi \circ \mc{E}_m \otimes \text{id}_{B_0}) (\ket{\psi}) \perp (\Phi \circ \mc{E}_{m'} \otimes \text{id}_{B_0}) (\ket{\psi}).$$ 
\end{definition}

\begin{definition}[Entanglement-assisted one-shot zero-error quantum capacity]
    The entanglement-assisted one-shot zero-error quantum capacity of a CPTP map $\Phi: B(\mc{H}_A) \to B(\mc{H}_B)$ is defined as
\begin{equation}
    Q_{0,E}^{(1)}(\Phi) \coloneqq \sup\limits_{\ket{\psi}, ~\mc{C}} \log \text{dim}(\mc{C}),
\end{equation}
where the supremum is over all pure bipartite (entangled) states $\ket{\psi} \in D(\mc{H}_{A_0} \otimes \mc{H}_{B_0})$, and CPTP maps $\mc{E} : B(\mc{C} \otimes \mc{H}_{A_0}) \to B(\mc{H}_A)$, for all subspaces $\mc{C}$ for which there exists a recovery channel $\mc{R}: B(\mc{H}_B \otimes \mc{H}_{B_0}) \to B(\mc{C})$ satisfying $\mc{R}\Big(((\Phi \circ \mc{E}) \otimes \text{id})(\rho \otimes \psi)\Big) = \rho$ for all $\rho \in D(\mc{C})$ with supp$(\rho) \subseteq \mc{C}$.
\end{definition}

In the absence of the shared entangled state between the sender and receiver, the above capacities reduce to the usual unassisted classical and quantum zero-error capacities.

\subsection{SWAP test}

The well-known SWAP test \cite{BCWdW01} was introduced in the context of quantum fingerprinting for a verifier to test if two quantum states are identical $\ket{\psi} = \ket{\phi}$, or have inner product at most $\delta$, i.e., $\abs*{\braket{\phi}{\psi}} \leq \delta.$ This is a purely quantum operation with no classical analogue. It works as follows: the test is a procedure that measures and outputs the first qubit of the state $(H \otimes I)(\text{c-SWAP})(H \otimes I)\ket{0}\ket{\phi}\ket{\psi};$ where $H$ is the Hadamard transform mapping $\ket{b} \to \frac{1}{\sqrt{2}}(\ket{0} + (-1)^b\ket{1}),$ SWAP is the operation $\ket{\phi}\ket{\psi} \to \ket{\psi}\ket{\phi},$ and c-SWAP is the controlled SWAP gate —~controlled by the first qubit. Through the execution of this operation, one gets that the final state before the measurement is $\frac{1}{2}\ket{0}(\ket{\phi}\ket{\psi} + \ket{\psi}\ket{\phi}) + \frac{1}{2}\ket{1}(\ket{\phi}\ket{\psi} - \ket{\psi}\ket{\phi}).$ Measuring the first qubit of this state produces outcome $1$ with probability $\frac{1}{2} - \frac{1}{2}\abs*{\braket{\phi}{\psi}}^2.$

\section{Superactivation of cheating for quantum games}

In this section we show, for one-round quantum games, it is possible that the classical value of the games superactivates under parallel repetition, i.e., the entanglement in the question state allows the players to cheat or win perfectly in one or more repetitions even when it is impossible for them to perfectly in a single instance (\cref{thm:foul-play}). However, we show that the opposite is true for the entangled value of quantum games, as expected. That is, the entangled value of two instances is always bounded by that of a single instance (\cref{thm:imposs-q-game}).

\begin{proposition} \label{thm:imposs-q-game}
For any quantum games $G_1,G_2$, we have $\omega^*(G_1\times G_2) \le \omega^*(G_i)$, where $i=1,2$. 
\end{proposition}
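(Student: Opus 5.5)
The plan is to show that any entanglement-assisted strategy for $G_1\times G_2$ can be converted into an entanglement-assisted strategy for $G_i$ (say $i=1$) whose winning probability is at least as large. The key observation is that the question state $\ket{\psi_2}_{A_2B_2R_2}$ of the second game can be prepared by the players themselves: since arbitrary shared entanglement is allowed, the registers $A_2$ and $B_2$ can be absorbed into the preshared state. The referee's register $R_2$ is simply held by one of the players, say Alice.

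Concretely, fix a strategy $S=(\mc{E},\mc{F},\ket{\varphi}_{E_AE_B})$ for $G_1\times G_2$, where $\mc{E}$ acts on $A_1A_2E_A$ and outputs $a_1a_2$, and $\mc{F}$ acts on $B_1B_2E_B$ and outputs $b_1b_2$. I will define a strategy for $G_1$ as follows.
\begin{itemize}
\item The shared state is $\ket{\varphi'}=\ket{\varphi}_{E_AE_B}\otimes\ket{\psi_2}_{A_2B_2R_2}$, with $E_A'=E_AA_2R_2$ on Alice's side and $E_B'=E_BB_2$ on Bob's side.
\item Alice's channel is $\mc{E}'=\Tr_{a_2R_2}\circ(\mc{E}\otimes\id_{R_2})$, with output $a_1$.
\item Bob's channel is $\mc{F}'=\Tr_{b_2}\circ\mc{F}$, with output $b_1$.
\end{itemize}
Then the joint state on $a_1b_1R_1$ produced by $S'$ in $G_1$ is exactly the marginal on $a_1b_1R_1$ of the final state $\sigma$ on $a_1a_2b_1b_2R_1R_2$ produced by $S$ in $G_1\times G_2$. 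Hence
\[
\omega(G_1;S')=\Tr\bigl((\Pi^{(1)}_{\text{acc}}\otimes I_{a_2b_2R_2})\,\sigma\bigr)\ \ge\ \Tr\bigl((\Pi^{(1)}_{\text{acc}}\otimes\Pi^{(2)}_{\text{acc}})\,\sigma\bigr)=\omega(G_1\times G_2;S).
\]
The inequality holds because $0\le \Pi^{(1)}\otimes\Pi^{(2)}\le \Pi^{(1)}\otimes I$, which follows since $\Pi^{(1)}\otimes(I-\Pi^{(2)})\ge0$ for commuting projections on disjoint tensor factors.

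Taking the supremum over $S$ gives $\omega^*(G_1\times G_2)\le\omega^*(G_1)$. The same construction with the roles of the two games swapped gives the bound for $i=2$.

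The only point requiring care is bookkeeping. I need to check that the ordering of tensor factors and the placement of $R_2$ on Alice's side leave the relevant reduced state unchanged; this is immediate because $\mc{F}$ and $\mc{E}$ act on disjoint systems and partial traces commute with channels on other factors. I should also note that the entangled value may be a supremum rather than a maximum, which is harmless since the argument is strategy-by-strategy and the dimensions of $\ket{\varphi'}$ stay finite. The main conceptual step, and the place where the argument genuinely uses entanglement assistance, is absorbing $\ket{\psi_2}$ into the preshared resource. This is exactly what fails for the local value, consistent with \cref{thm:foul-play}.
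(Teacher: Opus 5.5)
Your proof is correct and matches the paper's argument. Both absorb the second game's question state into the players' shared resource, run the product-game channels, discard the second-copy outputs, and conclude with $\Pi^{(1)}_{\text{acc}}\otimes\Pi^{(2)}_{\text{acc}}\le\Pi^{(1)}_{\text{acc}}\otimes I$. The only difference is that the paper shares the mixed reduced state $\rho_{A_2B_2}=\operatorname{Tr}_{R_2}\bigl(\ket{\psi_2}\bra{\psi_2}\bigr)$, whereas you share the purification $\ket{\psi_2}$ with $R_2$ given to Alice, which keeps the shared resource pure as the paper's definition of an entanglement-assisted strategy formally requires.
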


\begin{proof}
We prove the claim for $i=1$; the proof for $i=2$ is identical. Let
\[
    (\varphi_{E_AE_B}, \mathcal{E}, \mathcal{F})
\]
be an arbitrary entanglement-assisted strategy for $G_1 \times G_2$, with
success probability $p$. Thus
\[
    \mathcal{E} :
    \mathrm{L}(\mathcal{H}_{A_1} \otimes \mathcal{H}_{A_2}
    \otimes \mathcal{H}_{E_A})
    \to
    \mathrm{L}(\mathcal{H}_{a_1} \otimes \mathcal{H}_{a_2})
\]
and similarly
\[
    \mathcal{F} :
    \mathrm{L}(\mathcal{H}_{B_1} \otimes \mathcal{H}_{B_2}
    \otimes \mathcal{H}_{E_B})
    \to
    \mathrm{L}(\mathcal{H}_{b_1} \otimes \mathcal{H}_{b_2}).
\]

We construct an entanglement-assisted strategy for $G_1$ that succeeds with probability at least $p$. In addition to
$\varphi_{E_AE_B}$, Alice and Bob share the reduced question state of the
second game,
\[
    \rho_{A_2B_2}
    =
    \operatorname{Tr}_{R_2}
    \bigl(\ket{\psi_2}\bra{\psi_2}_{A_2B_2R_2}\bigr).
\]
Upon receiving $A_1$ and
$B_1$, they apply the same channels $\mathcal{E}$ and $\mathcal{F}$ as in the
strategy for $G_1 \times G_2$, and then return only the registers $a_1$ and
$b_1$, discarding $a_2$ and $b_2$.

Let $\rho_{a_1a_2b_1b_2R_1R_2}$ be the state produced by the original strategy
for $G_1 \times G_2$. Its success probability is
\[
    p
    =
    \operatorname{Tr}\Bigl[
        \bigl(\Pi_{\mathrm{acc}}^{(1)} \otimes
        \Pi_{\mathrm{acc}}^{(2)}\bigr)
        \rho_{a_1a_2b_1b_2R_1R_2}
    \Bigr].
\]
Since $\Pi_{\mathrm{acc}}^{(1)} \otimes \Pi_{\mathrm{acc}}^{(2)}
    \preceq
\Pi_{\mathrm{acc}}^{(1)} \otimes I_{a_2b_2R_2}$,
the strategy constructed for $G_1$ succeeds with probability
\[
    \operatorname{Tr}\Bigl[
        \bigl(\Pi_{\mathrm{acc}}^{(1)} \otimes I_{a_2b_2R_2}\bigr)
        \rho_{a_1a_2b_1b_2R_1R_2}
    \Bigr]
    \geq p.
\]
Therefore every entanglement-assisted strategy for $G_1 \times G_2$ gives an
entanglement-assisted strategy for $G_1$ with success probability at least as
large. Taking the supremum over all strategies for $G_1 \times G_2$ gives the desired statement.
\end{proof}

\begin{corollary}
    For a quantum game $G$, if $\omega(G^2)=p$ then $\omega^*(G)\ge p$.
\end{corollary}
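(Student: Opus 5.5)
The plan is to combine two facts: an unentangled strategy is a special case of an entangled one, and \cref{thm:imposs-q-game} applied with $G_1=G_2=G$. The argument has three steps.

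First, observe that for any quantum game $H$ we have $\omega(H)\le\omega^*(H)$. Any local strategy $(\mc{E},\mc{F})$ can be viewed as an entanglement-assisted strategy $(\mc{E}\otimes\Tr_{E_A},\,\mc{F}\otimes\Tr_{E_B},\,\ket{\varphi})$ with an arbitrary (say product, or trivial one-dimensional) shared state $\ket{\varphi}_{E_AE_B}$. This strategy has exactly the same winning probability, so $\omega(H;S)\le\omega^*(H)$ for every local $S$, and taking the supremum gives the inequality. Applying this with $H=G^2=G\times G$ gives $p=\omega(G^2)\le\omega^*(G\times G)$.

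Second, invoke \cref{thm:imposs-q-game} with $G_1=G_2=G$ to obtain $\omega^*(G\times G)\le\omega^*(G)$. Chaining the two inequalities yields $\omega^*(G)\ge p$, as claimed.

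There is no real obstacle here. The only point needing care is the first step: checking that appending an ancilla and then discarding it does not change the accepting probability. This is immediate from the definition of $\omega(G;S)$ and the linearity of the partial trace. Alternatively, one could bypass the proposition and argue directly: take an optimal local strategy for $G^2$ (which exists by compactness, as noted earlier). The players then share the reduced state $\rho_{A_2B_2}$ of the second copy's question as entanglement, and run the strategy on the first copy, which wins $G$ with probability at least $p$ by the operator inequality $\Pi_{\mathrm{acc}}\otimes\Pi_{\mathrm{acc}}\preceq\Pi_{\mathrm{acc}}\otimes I$. This mirrors the proposition's proof and needs no purification beyond noting that $\rho_{A_2B_2}$ can be prepared from a shared pure state by tracing out a local purifying register.
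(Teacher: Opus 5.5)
Your argument is correct and matches how the paper intends the corollary to follow. The paper states it without a separate proof, directly after \cref{thm:imposs-q-game}, via the chain $p=\omega(G^2)\le\omega^*(G\times G)\le\omega^*(G)$; the first inequality holds because local strategies are a special case of entanglement-assisted ones, and your alternative direct argument just inlines the proposition's proof (your remark about purifying $\rho_{A_2B_2}$ correctly handles the paper's requirement of a pure shared state).
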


\begin{remark}
    For two distinct games the above statements are false. The following counterexample shows that we can find games, where $\omega(G\times H)>\omega(G)$. Consider a game $G$ with $\omega^*(G)>\omega(G)$. Let $H$ be a game where the accepting projection is identity, i.e., the players always win but the question state of $H$ supplies the players with the entangled state needed to reach $\omega^*(G)$. Then $\omega(G\times H) = \omega^*(G) >\omega(G)$.
\end{remark}

We now show our main result, that the following phenomenon is true for quantum games.

\begin{theorem} \label{thm:foul-play}
There exists a quantum game $G$ such that $\omega(G)<1$ but $\omega(G^{n}) = 1$ for all $n \geq 2$.
\end{theorem}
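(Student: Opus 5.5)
The plan is to realise the game as a \emph{state-conversion task under local operations and shared randomness} (LOSR), as the technical overview suggests. The referee prepares a question state $\ket{\psi}_{ABR}$ and keeps $R$, so that the reduced state on $AB$ is fixed. The accepting projection $\Pi_{\mathrm{acc}}$ on $abR$ certifies that the answer registers, together with $R$, lie in a prescribed target subspace $\mathcal{T}$. For a local strategy $(\mathcal{E},\mathcal{F})$, one may purify both channels to isometries (which only adds local environments). Perfect winning then means that local isometries $V_A\otimes V_B$ map $\ket{\psi}$ into $\mathcal{T}\otimes(\text{environments})$.

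\textbf{Step 1 (reduce both values to Schmidt data).} I will write perfect winning as a combinatorial condition on Schmidt coefficients. The idea is to choose $\ket{\psi}$ and $\mathcal{T}$ so that $\mathcal{T}$ is spanned by states with prescribed Schmidt vectors across the $a|bR$ and $b|aR$ cuts. Since local isometries preserve Schmidt coefficients up to zero padding, $\omega(G)=1$ would then require the probability vector $p$ of $\psi$ to coincide, as a multiset, with one of the allowed target vectors. There is a caveat here. If $\mathcal{T}$ were spanned by a single pure state, squaring could never help: equality of power sums of $p^{\otimes 2}$ and $q^{\otimes 2}$ forces $p=q$. So $\mathcal{T}$ must be at least two-dimensional, or must use $R$ non-trivially. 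Then $\Pi_{\mathrm{acc}}\otimes\Pi_{\mathrm{acc}}$ accepts superpositions, or mixtures over shared randomness, of \emph{different} target branches across the two copies. Such states are not products of single-copy targets, and the product condition on Schmidt vectors then fails to hold. This is where the extra entanglement from the second question copy is used jointly.

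\textbf{Step 2 (choose the vectors).} I will look for finitely supported probability vectors $p,q_1,q_2$ with three properties. First, $p$ is not a padding-rearrangement of any vector that $\mathcal{T}$ allows. Second, $p\otimes p$ is a rearrangement of a vector realised by some state in $\mathcal{T}\otimes\mathcal{T}$, for example one built from $q_1\otimes q_2$ and $q_2\otimes q_1$ branches. Third, the analogous identity holds for $p^{\otimes n}$ for every $n\ge2$. For the third property I will show that the case $n=3$ reduces to $n=2$: apply the two-copy isometry to the first two copies, and handle the third copy by regrouping, using a rearrangement identity of the form $p\otimes(\text{$n=2$ target})\sim(\text{$n=2$ target})\otimes p$. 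General $n$ then follows by writing $n=2k$ or $n=2k+3$. Finding such vectors is the search step; the paper notes it was done computationally.

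\textbf{Step 3 (strict inequality).} Finally, I will show $\omega(G)<1$. The strategy set is compact and the value is attained (as noted in the preliminaries). Hence $\omega(G)=1$ would give an exact local-isometry (LOSR) conversion of $\psi$ into $\mathcal{T}$, which is excluded by the Schmidt-vector obstruction of Step 1. I expect the main obstacle to be Step 2 combined with designing $\mathcal{T}$. The target subspace must be rich enough that the second copy can act as a catalyst-like resource, yet rigid enough that a single copy is obstructed. My first attempt will be a two-branch target flagged in $R$, with the small vectors found numerically.
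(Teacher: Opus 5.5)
\textbf{Step 1 contains a genuine gap, and it propagates through the rest of the plan.} Your Schmidt-vector criterion is wrong. After you dilate $\mathcal{E},\mathcal{F}$ to isometries, perfect acceptance of a pure target $\ket{\phi_q}_{ab}$ (with trivial $R$) forces only
$(V_A\otimes V_B)\ket{\psi_p}=\ket{\phi_q}_{ab}\otimes\ket{\gamma}_{E_AE_B}$. Here $\ket{\gamma}$ may be entangled across the two environments, which the players simply discard. Comparing Schmidt vectors across the Alice/Bob cut gives $p=q\otimes r$ up to permutation and zero padding, where $r$ is the Schmidt vector of $\gamma$. Conversely, any such factorisation is implemented by local unitaries followed by discarding.

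So the condition is divisibility by $q$, not equality with $q$. Your caveat that a single pure target can never benefit from squaring is therefore false: $p$ can fail to be divisible by $q$ while $p\otimes p$ is divisible by $q\otimes q$. The paper's game is exactly this simplest design. It is $G_{\mathrm{SC}}$ with question $\ket{\psi_p}_{AB}=\sum_i\sqrt{p_i}\ket{i,i}$, trivial $R$, and $\Pi_{\mathrm{acc}}=\ketbra{\phi_q}$. The redesign you propose (multidimensional $\mathcal{T}$, $R$ used nontrivially) is motivated by the faulty equality criterion. The obstruction you invoke in Step 3 inherits the same error.

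\textbf{The proposal also never produces a game or vectors, and that is the entire content of the proof.} The missing idea is to take dyadic vectors and encode them by generating polynomials, so that tensor products become polynomial products. Then $q=(1,2)/3$ corresponds to $1+x$. The vector $201p=(1,2,2,4,4,4,8,16,32,32,32,64)$ corresponds to $P(x)=(1+x)S(x)$ with $S(x)=1+x+2x^2-x^3+2x^4+x^5$. The negative coefficient of $S$ gives $p\neq q\otimes s$ for every $s$: pairing the entries in ratio $1:2$ forces a negative pair count $t_3=-1$. Meanwhile $S^2$ and $S^3$ have nonnegative coefficients, which gives $p^{\otimes 2}=q^{\otimes 2}\otimes r_2$ and $p^{\otimes 3}=q^{\otimes 3}\otimes r_3$.

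Your reduction of $n=3$ to $n=2$ also does not work as stated. The commutation $p\otimes T\sim T\otimes p$ is trivial. After converting two copies you still hold a lone copy of $\psi_p$, plus leftover entanglement with Schmidt vector $r_2$. Converting it requires $r_2\otimes p$ to be divisible by $q$, i.e.\ nonnegativity of $S^3$. That is an independent condition, which the paper checks directly. Your last step, writing $n=2a+3b$ and tensoring perfect strategies, is fine once $n=2$ and $n=3$ are established.
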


To prove this, we first define the quantum game we work with, denoted by $G_{\mathrm{SC}}$. It captures the information-processing task of state conversion.

\begin{definition}
    Let $G_{\mathrm{SC}}$ be a quantum game with $R, A, B$ denoting the referee, and players, Alice and Bob's systems respectively. Fix $p$ and $q$ to be probability vectors. Then the game is played as follows:
    \begin{enumerate}
        \item Let $R$ be trivial, and let $R$ send $A$ and $B$ the bipartite pure state $\ket{\psi_p}_{AB} = \sum \limits_i \sqrt{p_i} \ket{i, i}.$
        \item $R$ accepts if and only if the returned state is $\ket{\phi_q}_{ab} = \sum_j \sqrt{q_j} \ket{j, j},$ i.e., the accepting projection $\Pi_{acc} = \ketbra{\phi_q}.$
    \end{enumerate}
    The local strategy needed to play this game is:
    \begin{enumerate}
        \item A pair of local channels $\mc{E}$ and $\mc{F}$.
        \item The winning probability corresponding to a local strategy is given by $\omega(G_{\mathrm{SC}}) \coloneqq \braket{\phi_q}{(\mc{E} \otimes \mc{F})(\psi_p)}{\phi_q}.$
        \item A perfect local strategy would then exactly correspond to a pair of channels $\mc{E}$ and $\mc{F}$ such that $(\mc{E} \otimes \mc{F}) (\psi_p) = \phi_q.$
        \item A perfect local strategy corresponds to perfect state conversion under local operations and shared randomness (LOSR).
    \end{enumerate}
\end{definition}

Now we prove our main result of this section. 

\begin{proof}[Proof of \cref{thm:foul-play}]
    Let $G = G_{\mathrm{SC}}.$ We shall prove this by showing that there exist probability vectors $p, q$ such that perfect state conversion under LOSR fails for one copy of the state, but succeeds for $2$ and $3$ copies of the resource state, and hence $n$. A simple criterion for exact conversion of pure state $\psi_p$ to $\phi_q$ is if and only if $\exists~ r$, a probability vector, such that $p = q \otimes r$, up to a permutation. To prove our claim, we want that $p \neq q \otimes s$, for every probability vector $s$, while $p^{\otimes 2} = q^{\otimes 2} \otimes r_2$ and $p^{\otimes 3} = q^{\otimes 3} \otimes r_3$ for some probability vectors $r_2, r_3$. 
    
    Consider the polynomial $S(x) = 1+x+2x^2-x^3+2x^4+x^5$ and define $P(x) =(1+x)S(x) = 1+2x+3x^2+x^3+x^4+3x^5+x^6.$ Although $S(x)$ has a negative coefficient, both its square and cube have nonnegative coefficients: $S(x)^2 = 1+2x+5x^2+2x^3+6x^4+2x^5 +11x^6+2x^8+4x^9+x^{10}$ and $S(x)^3 = 1+3x+9x^2+10x^3+18x^4+12x^5+35x^6+18x^7 +36x^8+5x^9+33x^{10}+18x^{11} +2x^{12}+9x^{13}+6x^{14}+x^{15}.$
    Set $x=2$ and define $q = \left(\frac{1}{3},\frac{2}{3}\right)$. Let $p$ be the normalised multiset $p = \frac{1}{201}\bigl(1,2,2,4,4,4,8,16,32,32,32,64\bigr).$
    
    The multiplicities of the dyadic entries of $p$ are encoded by $P(x)$. Since $P(x)^2 = (1+x)^2S(x)^2$ and $S(x)^2$ has nonnegative integer coefficients, there exists a probability vector $r_2$ such that, up to permutation, $p^{\otimes 2} = q^{\otimes 2}\otimes r_2.$ Likewise, $P(x)^3 = (1+x)^3S(x)^3,$ and the nonnegativity of the coefficients of $S(x)^3$ implies that there exists a probability vector $r_3$ satisfying $p^{\otimes 3} = q^{\otimes 3}\otimes r_3.$ Consequently, the corresponding quantum game $G$ satisfies $\omega\!\left(G^2\right) = \omega\!\left(G^3\right) = 1.$
    
    On the other hand, $p\neq q\otimes s$ for every probability vector $s$. Indeed, since $q=(1,2)/3$, such a factorization would require the entries of $p$ to be partitionable into pairs in the ratio $1:2$. The multiplicities of the powers $1,2,4,8,16,32,64$ in $201p$ are $(c_0,c_1,c_2,c_3,c_4,c_5,c_6) = (1,2,3,1,1,3,1).$ If $t_e$ denotes the number of pairs of the form $(2^e,2^{e+1})$, then we would need $c_e=t_e+t_{e-1}$, $t_{-1}=0.$ This recursively gives $t_0=1,$ $t_1=1,$ $t_2=2,$ $t_3=-1,$ which is impossible. Thus no exact one-copy conversion exists, and hence $\omega(G)<1.$
    
    In fact, this example satisfies the stronger conclusion $\omega(G)<1, ~\omega(G^n)=1 ~\text{for every }n\geq 2.$
    Indeed, every integer $n\geq 2$ can be written as $n=2a+3b$ for some $a,b\geq 0$. It follows that $S(x)^n = \bigl(S(x)^2\bigr)^a \bigl(S(x)^3\bigr)^b$ has nonnegative coefficients, and therefore $p^{\otimes n} = q^{\otimes n}\otimes r_n$ up to permutation for some probability vector $r_n$.
\end{proof}

\section{Interactive quantum protocols for zero-error capacities}

In this section we formalise quantum games and interactive protocols for zero-error classical and quantum capacities, and their entanglement-assisted counterparts. This is at the core of our construction of the counterexamples. \cref{thm:ch-pr-cor} and \cref{thm:Game-channel-correspondence} provide a mapping from channels to interactive protocols (and games subsequently) for zero-error classical and quantum capacities respectively.

\subsection{Zero-error classical capacities}

In this section we restrict our attention to classical capacities.

\begin{definition}[Sequential 2-round protocol] \label{def:cl-cap-2ro}
    Let $G_{seq}$ be a QMIP protocol where $V, A, B$ denote the systems of the verifier, and players Alice and Bob respectively. Let $m, m' \in \left[M\right]$ be input and output messages through a noisy quantum channel $\mc{N}.$ The game is played as follows:
    \begin{enumerate}
        \item $V$ samples $m \leftarrow \left[M\right]$ and sends it to $A$.
        \item $A$ then outputs $\rho_A$, her share of a quantum state $\rho_{AB}$ potentially entangled with $B$.
        \item $V$ applies $\mc{N}: B(\mc{H}_a) \to B(\mc{H}_b)$ to $\rho_A$ and obtains $\rho_{B'}$ which he then sends to $B'$ as a question.
        \item $B$ receives $\rho_{B'}$ and then sends $m' \in \left[M\right]$ as the answer to $V.$
        \item The players win if $m = m'.$
    \end{enumerate}

    The strategy needed to play this game is as follows:
    \begin{enumerate}
        \item $A$ and $B$ share an entangled state $\rho_{AB}.$ 
        \item $A$ has an encoding channel $\mc{E}_m : B(\mc{H}_A) \to B(\mc{H}_a)$ for each $m \in \left[M\right].$
        \item $B$ performs a decoding POVM $\{D_{m'} \in B(\mc{H}_B) \otimes B(\mc{H}_b)\}$ for each $m' \in \left[M\right].$
    \end{enumerate}
\end{definition}

\begin{theorem}[From channels to protocols] \label{thm:ch-pr-cor}
    Given a channel $\mc{N}$, there exists $\mathsf{QMIP}$ protocol $G_{seq}$ such that the following hold true:
    \begin{enumerate}
        \item $G_{seq}$ has a perfect entanglement-assisted strategy if and only if $C_{0,E}^{(1)}(\mc{N}) \geq \log \abs*{M}.$
        \item $\omega(G_{seq}) = 1$ if and only if $C_0^{(1)}(\mc{N}) \geq \log \abs*{M}.$
    \end{enumerate}
\end{theorem}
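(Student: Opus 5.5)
The plan is to take $G_{seq}$ to be exactly the sequential protocol of \cref{def:cl-cap-2ro} built from the given channel $\mc{N}$, with message set $[M]$, and to unpack what a perfect strategy means. I will first normalise strategies. Without loss of generality the shared state is pure, $\ket{\psi}_{A_0B_0}$, since any mixed $\rho_{AB}$ can be purified and the purification handed to one player. Alice's action on question $m$ is an encoding channel $\mc{E}_m$ on her share. Bob's action is a POVM $\{D_{m'}\}$ on $B_0$ together with the received output register. Since $m$ is sampled uniformly, the winning probability is
\[
\frac{1}{M}\sum_{m}\Tr\!\Big(D_m\,\sigma_m\Big),\qquad \sigma_m := \big((\mc{N}\circ\mc{E}_m)\otimes \mathrm{id}_{B_0}\big)(\psi),
\]
which equals $1$ if and only if $\Tr(D_m\sigma_m)=1$ for every $m$.

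The core step is the standard equivalence between perfect discrimination and orthogonality. In one direction, suppose $\Tr(D_m\sigma_m)=1$ for all $m$. Since $\sum_{m'}D_{m'}=I$, we get $\Tr(D_{m'}\sigma_m)=0$ for $m'\ne m$. Hence $\mathrm{supp}(\sigma_m)\subseteq\ker D_{m'}$ while $\mathrm{supp}(\sigma_{m'})$ lies in the eigenvalue-$1$ eigenspace of $D_{m'}$. This gives $\sigma_m\perp\sigma_{m'}$. Conversely, if the $\sigma_m$ are pairwise orthogonal, let $P_m$ be the projector onto $\mathrm{supp}(\sigma_m)$. Taking $D_m=P_m$ for $m<M$ and $D_M=I-\sum_{m<M}P_m$ gives a valid POVM that wins with certainty. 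Therefore a perfect entanglement-assisted strategy exists exactly when there are $\ket{\psi}$ and $M$ encodings with pairwise orthogonal outputs. By definition this is the condition $C^{(1)}_{0,E}(\mc{N})\ge\log M$, with the supremum attained at size $\ge M$ if and only if some witness of size $M$ exists. Any larger orthogonal family can simply be truncated to $M$ elements. This proves item~1.

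For item~2 I run the same argument with the shared state trivial, i.e.\ $B_0$ one-dimensional. This is the unentangled setting, where the provers' private registers start in $\ket{0\cdots0}$. One point to check is that Alice's encoding may then be any state preparation $\rho_m$ on the channel input, with no shared reference system. The orthogonality criterion becomes $\mc{N}(\rho_m)\perp\mc{N}(\rho_{m'})$, which is precisely the unassisted one-shot zero-error condition $C_0^{(1)}(\mc{N})\ge\log M$. Because the local value is a maximum over a compact set, as noted in the preliminaries, $\omega(G_{seq})=1$ is equivalent to the existence of a perfect local strategy, so the equivalence holds for the value itself.

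The main obstacle is not the orthogonality lemma but the bookkeeping needed to match the protocol's strategy space with the capacity definitions. Three points require care. First, the purification step: Bob must be able to hold the purifying system without loss of generality. Second, the encodings act only on Alice's share $A_0$, as in the definition of $C_{0,E}^{(1)}$. Third, in the entangled case we only claim the existence of a perfect strategy, not $\omega^*=1$. This distinction matters because the supremum may not be attained, so I will state item~1 about perfect strategies exactly as written.
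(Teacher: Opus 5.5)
Your proposal is correct and follows the same route as the paper. You write the winning probability as $\frac{1}{M}\sum_m \Tr(D_m\sigma_m)$, note that a perfect strategy forces $\Tr(D_m\sigma_m)=1$ for every $m$, identify perfect discrimination of the $\sigma_m$ with the zero-error capacity condition, and then drop the shared state for the local case. Along the way you make explicit two steps the paper leaves implicit: the equivalence between perfect discrimination and pairwise orthogonal supports, and the purification that reduces a mixed shared state to the pure-state form used in the definition of $C_{0,E}^{(1)}$.
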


\begin{proof}
   We first prove the entanglement-assisted statement. Let
\begin{equation*}
    S=
    \left(
        \rho_{AB},
        \{\mc{E}_m\}_{m\in[M]},
        \{D_{m'}\}_{m'\in[M]}
    \right)
\end{equation*}
be an entanglement-assisted strategy for
$G_{seq}(\mc{N},M)$, and define
\begin{equation*}
    \sigma_m
    :=
    \bigl(
        (\mc{N}\circ\mc{E}_m)\otimes\operatorname{id}_B
    \bigr)(\rho_{AB}).
\end{equation*}
Its winning probability is
\begin{equation*}
    \omega(G_{seq};S)
    =
    \frac{1}{M}
    \sum_{m\in[M]}
    \Tr(D_m\sigma_m).
\end{equation*}

Suppose first that $S$ is perfect. Since $0\leq\Tr(D_m\sigma_m)\leq 1$ for every $m$, the equality $\omega(G_{\mathrm{seq}};S)=1$ implies that $\Tr(D_m\sigma_m)=1$ for every $m\in[M]$. As $\{D_{m'}\}_{m'\in[M]}$ is a POVM,we also have $\Tr(D_{m'}\sigma_m)=0$ whenever $m'\neq m$. Thus the states $\{\sigma_m:m\in[M]\}$ are perfectly distinguished by the POVM $\{D_m\}_{m\in[M]}$. It follows that $ C_{0,E}^{(1)}(\mc{N})\geq\log \abs{M}$

Conversely, suppose that $ C_{0,E}^{(1)}(\mc{N})\geq\log \abs{M}$. The we have a shared state $\rho_{AB}$, encoding channels $\{\mc{E}_m\}_{m\in[M]}$, and a decoding POVM $\{D_m\}_{m\in[M]}$ satisfying $\Tr(D_m\sigma_m)=1$ for every $m\in[M]$. These define an entanglement-assisted strategy $S$ for $G_{seq}$ with $\omega(G_{\mathrm{seq}};S)=1$.

The local statement follows by the same argument without a
shared entangled state: $G_{seq}$ admits a
perfect local strategy if and only if $C_0^{(1)}(\mc{N})\geq\log \abs{M}$.
The result follows from the fact that the existence of a perfect local strategy is equivalent to $\omega(G_{seq}) = 1$.
\end{proof}

\begin{definition}[Parallel 1-round game] \label{def:game-clas}
    Let $G_{par}$ be a quantum game where $V, A, B$ denote the systems of the verifier, and players Alice and Bob respectively. Let $m, m' \in \left[M\right]$ be input and output messages through a noisy quantum channel $\mc{N}$. The game is played as follows:
    \begin{enumerate}
        \item $V$ samples $m \leftarrow \left[M\right]$ and sends it to $A$, and prepares $\nu_{ab}$, the Choi state of $\mc{N}$ and sends part $b$ to $B.$
        \item $A$ sends $\rho_A$ to $V$ which is her share of the entangled state $\rho_{AB}.$
        \item $B$ measures the system $Bb$ and sends the outcome $m'$ to $V.$
        \item $V$ measures in the Bell basis of $Aa$. If the outcome is not $\ket{\phi^{+}}$ then $V$ accepts. If the outcome is $\ket{\phi^{+}}$ then $V$ accepts if $m = m'.$
        \item The players win when $V$ accepts.
    \end{enumerate}

    The strategy needed to play this game is as follows:
    \begin{enumerate}
        \item $A$ and $B$ share an entangled state $\rho_{AB}.$ 
        \item $A$ has an encoding channel $\mc{E}_m : B(\mc{H}_A) \to B(\mc{H}_a)$ for each $m \in \left[M\right].$
        \item $B$ performs a decoding POVM $\{D_{m'} \in B(\mc{H}_B) \otimes B(\mc{H}_b)\}$ for each $m' \in \left[M\right].$
    \end{enumerate}
\end{definition}

\begin{lemma}[Multi-to-single round equivalence]\label{thm:game-prot-val}
    Given a channel $\mc{N}$, there exists a quantum game $G_{par}$ such that the following hold true:
    \begin{enumerate}
        \item $1 - \frac{1 - \omega^{\ast}(G_{seq})}{d^2_a} = \omega^{\ast}(G_{par}).$
        \item $1 - \frac{1 - \omega(G_{seq})}{d^2_a} = \omega(G_{par}).$
    \end{enumerate}
\end{lemma}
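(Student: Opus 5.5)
The plan is to prove both identities with one teleportation argument. It maps strategies of $G_{seq}$ to strategies of $G_{par}$ and back, with the same resources: shared entanglement for $\omega^*$, none for $\omega$.

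The key observation is that in $G_{par}$ the verifier's Bell measurement on $Aa$ produces the outcome $\ket{\phi^+}$ with probability exactly $1/d_a^2$. This holds whatever Alice does, because the register $a$ of the Choi state $\nu_{ab}$ is maximally mixed and is never touched by the players. Here $\nu_{ab}=(\mathrm{id}\otimes\mc{N})(\phi^+)$, and $a$ is the reference half kept by $V$. When that outcome occurs, the post-measurement state on $b$ together with the players' registers equals $\mc{N}$ applied to Alice's returned register, tensored with everything else. This is the standard gate-teleportation / Choi–Jamio\l{}kowski identity
\begin{equation*}
(\bra{\phi^+}_{Aa}\otimes I)\bigl(\sigma_{A\cdots}\otimes\nu_{ab}\bigr)(\ket{\phi^+}_{Aa}\otimes I)=\tfrac{1}{d_a^2}\,(\mc{N}\otimes\mathrm{id})(\sigma)_{b\cdots},
\end{equation*}
with the transpose absorbed by the symmetry of $\phi^+$. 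On the non-$\phi^+$ outcomes the verifier accepts unconditionally. So for any strategy $S$ (encodings $\mc{E}_m$, shared state $\rho_{AB}$, POVM $D_{m'}$ on $Bb$), the winning probability is
\begin{equation*}
\omega(G_{par};S)=\Bigl(1-\tfrac{1}{d_a^2}\Bigr)+\tfrac{1}{d_a^2}\,\omega(G_{seq};S),
\end{equation*}
where $S$ is read as the same-named strategy in $G_{seq}$ (\cref{def:cl-cap-2ro}). Since $d_a$ is fixed by $\mc{N}$, both sides are affine and increasing in each other. Taking suprema over entangled strategies gives item 1, and over unentangled strategies gives item 2.

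In order, I would: (i) write out the joint state in $G_{par}$ after Alice applies $\mc{E}_m$ to her share and Bob measures $\{D_{m'}\}$ on $Bb$; (ii) note that the POVM elements of Bob and the verifier act on disjoint registers, so they commute, and evaluate the acceptance probability as $\Pr[\text{not }\phi^+]+\Pr[\phi^+,\,m=m']$; (iii) apply the teleportation identity above to get the conditional term $\frac{1}{d_a^2}\cdot\frac1M\sum_m\Tr(D_m\sigma_m)$, with $\sigma_m$ exactly as in the proof of \cref{thm:ch-pr-cor}; (iv) check that the strategy spaces of the two protocols coincide, since both are specified by the same triple $(\rho_{AB},\{\mc{E}_m\},\{D_{m'}\})$.

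The main obstacle is step (iv) together with the order of moves. In $G_{seq}$, Bob's decoding may depend on the channel output, which arrives after Alice acts. In $G_{par}$, Bob measures $Bb$ with no wait for Alice and without seeing the Bell outcome. I would argue that this does not matter. In $G_{seq}$ Bob's action is already a POVM on $Bb$ alone, and the Bell measurement commutes with it, so the statistics are identical to the sequential order. A more general strategy for Bob in $G_{par}$ also collapses to a POVM on $Bb$ plus any shared ancilla. That ancilla is likewise available to a $G_{seq}$ prover, so no new strategies appear.

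A second care point concerns the unentangled value. In item 2 Alice's returned register $A$ must not be entangled with Bob beyond what is permitted in $G_{seq}$. Here the local value of $G_{seq}$ is $\rho_{AB}$ product (or Alice prepares her input from $m$ alone), and the same restriction carries over verbatim to $G_{par}$. The final remark is that the error term is exactly $1-\omega$ scaled by $1/d_a^2$, with no inequality anywhere.
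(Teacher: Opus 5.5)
Your proposal is correct and follows the paper's proof essentially step for step. You split on the verifier's Bell outcome: the non-$\phi^+$ branch occurs with probability $1-1/d_a^2$ and is always accepted. The Choi/teleportation identity shows that the $\phi^+$ branch reproduces $\omega(G_{seq};S)$ exactly, and you then take suprema over entangled (respectively product $\rho_{AB}$) strategies, just as the paper does. Your explicit check in step (iv) that the two strategy sets coincide, including the point about the order of moves, is left implicit in the paper and is worth keeping.
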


\begin{proof}
    Fix a strategy $S$ specified by $\rho_{AB}, \mc{E}_m, D_m$. Now consider two cases, first, where the measurement outcome of the verifier $V$ is not $\ket{\phi^{+}},$ and second, where the measurement outcome is $\ket{\phi^{+}}$. In the first case, the strategy succeeds with probability $1$, and this case occurs with probability $1 - \frac{1}{d^2_a}.$ For the second case, let $w$ be the winning probability conditioned on the verifier measuring $\ket{\phi^+}$. We will show that $w = \omega^{\ast}(G_{seq}; S)$. The winning probability of $G_{par}$ in this case is $w \coloneqq \expec_m d^2_a \Tr ((\ketbra{\phi^+}_{Aa} \otimes D^{Bb}_m) \mc{E}_m^A (\nu_{ab} \otimes \rho_{AB})).$ Here, the distribution is uniform over classical messages $m$. We know that $\nu_{ab} = (I \otimes \mc{N})(\ketbra{\phi^+}_{aa'})$ is the Choi state of $\mc{N}.$ To simplify this expression, first note that
    \begin{align*}
        d_a^2 \Tr_{Aa} (\ketbra{\phi^+}_{Aa} (\nu_{ab} \otimes \mc{E}_m^A(\rho_{AB}))) &= \Tr_{Aa} \sum_{i,i',j,j'} \ketbra{ii}{i'i'}_{Aa}(\ketbra{j}{j'}_{a}\otimes\mc{N}(\ketbra{j}{j'}_{b})\otimes\mc{E}_m^A(\rho_{AB})) \\
        &= \Tr_A \sum_{i,j} \ketbra{i}{j}_A (\mc{N}(\ketbra{j}{i}_b) \otimes \mc{E}_m^A(\rho_{AB})) \\
        &= (\mc{N} \otimes I_B)(\mc{E}_m^A(\rho_{AB})).
    \end{align*}
    Therefore, the winning probability $w = \expec_m \Tr(D_m^{Bb} (\mc{N} \otimes I_B)(\mc{E}_m^A(\rho_{AB}))) = \omega^{\ast}(G_{seq};S).$ Finally, we note that the second case occurs with probability $\frac{1}{d_a^2}.$ Combining the probabilities of the two cases we get that $1 - \frac{1}{d_a^2} \cdot 1 + \frac{1}{d_a^2} \cdot \omega^{\ast}(G_{seq};S) = \omega^{\ast}(G_{par};S)$ which gives the wanted result. For the second claim relating $\omega(G_{seq})$ with $\omega(G_{par})$ the same proof gives the result considering the shared state $\rho_{AB} = \rho_A \otimes \rho_B$ to be separable.
\end{proof}

\subsection{Zero-error quantum capacities}

In this section, we formalise a a quantum game and an interactive protocol for the zero-error quantum capacity and its entanglement-assisted counterpart.

\begin{definition}[Sequential 2-round protocol] \label{def:Hseq}
    Let $H_{seq}$ be a QMIP protocol where $V, A, B$ denote the systems of the verifier, and players Alice and Bob respectively. Let $\mc{N}$ be a noisy quantum channel. The game is played as follows:
    \begin{enumerate}
        \item $V$ samples $\psi$ from the uniform distribution on pure states in $\C^M$ and sends it to $A$.
        \item $A$ then outputs $\rho_A$, her share of a quantum state $\rho_{AB}$ potentially entangled with $B$.
        \item $V$ applies $\mc{N}: B(\mc{H}_a) \to B(\mc{H}_b)$ to $\rho_A$ and obtains $\rho_{B'}$ which he then sends to $B'$ as a question.
        \item $B$ receives $\rho_{B'}$ and then sends $\sigma \in \C^M$ as the answer to $V.$
        \item The players win if $\psi \otimes \sigma$ passes the swap test.
    \end{enumerate}

    The strategy needed to play this game is as follows:
    \begin{enumerate}
        \item $A$ and $B$ share an entangled state $\rho_{AB}.$ 
        \item $A$ has an encoding channel $\mc{E} : B(\mc{H}_A \otimes \C^M) \to B(\mc{H}_a).$
        \item $B$ performs a decoding channel $\mc{D} : B(\mc{H}_b \otimes \mc{H}_B) \to B(\C^M)$.
    \end{enumerate}
\end{definition}

\begin{theorem}[From channels to protocols] \label{thm:Game-channel-correspondence} Given a channel $\mc{N}$ there exists $\mathsf{QMIP}$ protocol $H_{seq}$ such that the following hold true:
    \begin{enumerate}
        \item $H_{seq}$ has a perfect entanglement-assisted strategy if and only if $Q_{0,E}^{(1)}(\mc{N}) \geq \log \abs*{M}.$
        \item $\omega(H_{seq}) = 1$ if and only if $Q_{0}^{(1)}(\mc{N}) \geq \log \abs*{M}.$
    \end{enumerate}
\end{theorem}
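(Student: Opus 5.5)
The plan follows the proof of \cref{thm:ch-pr-cor}, with perfect distinguishability replaced by perfect recovery of an arbitrary state in $\C^M$. First I would write the winning probability of $H_{seq}$ for a strategy $S=(\rho_{AB},\mc{E},\mc{D})$. Set $\sigma_\psi := \mc{D}\bigl(((\mc{N}\circ\mc{E})\otimes\id_B)(\psi\otimes\rho_{AB})\bigr)$. The swap-test formula from the preliminaries, extended to mixed states by linearity, gives a passing probability of $\tfrac12+\tfrac12\bra{\psi}\sigma_\psi\ket{\psi}$. Hence
\[
\omega(H_{seq};S)=\tfrac12+\tfrac12\,\expec_\psi \bra{\psi}\sigma_\psi\ket{\psi}.
\]
Under this convention, a perfect strategy is one for which every $\psi$ is accepted with certainty, i.e.\ the average fidelity equals $1$. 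If instead one insists that $\omega(H_{seq};S)=1$ literally, that is too strong because of the swap test's $\tfrac12$ floor. I would therefore first settle the convention: ``perfect'' means the fidelity term equals $1$. Equivalently, the verifier may repeat the swap test or use the projector onto $\ket{\psi}$.

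For the forward direction, suppose the average fidelity is $1$. The map $\psi\mapsto\bra{\psi}\sigma_\psi\ket{\psi}$ is continuous and bounded by $1$, and the Haar measure has full support on pure states. So $\bra{\psi}\sigma_\psi\ket{\psi}=1$ for every pure $\psi$, which forces $\sigma_\psi=\psi$. Define $\Lambda := \mc{D}\circ((\mc{N}\circ\mc{E})\otimes\id_B)(\,\cdot\,\otimes\rho_{AB})$. This is a channel on $B(\C^M)$ that fixes every pure state, so by linearity it is the identity on $B(\C^M)$. It follows that the code space $\mc{C}=\C^M$ with encoding $\mc{E}$, shared state $\rho_{AB}$ and recovery $\mc{R}=\mc{D}$ witnesses $Q_{0,E}^{(1)}(\mc{N})\ge\log M$.

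There is one gap here: the capacity definition requires a pure shared state. If $\rho_{AB}$ is mixed, I would purify it to $\ket{\tilde\rho}_{AB\,E_AE_B}$, give the purifying systems to $A$ and $B$, and have both parties ignore those systems. The converse direction is immediate, since the capacity data $(\ket\psi,\mc{E},\mc{R})$ for a code of dimension $\ge M$, restricted to an $M$-dimensional subspace, is already a strategy with $\sigma_\psi=\psi$.

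For the unassisted statement I would rerun the same argument with $\rho_{AB}=\rho_A\otimes\rho_B$; Alice's and Bob's local states can be absorbed into $\mc{E}$ and $\mc{D}$ respectively. To get ``$\omega(H_{seq})=1$'' in the local case, I would also need the maximum to be attained. This holds by the compactness argument given for local strategies in the preliminaries, since the dimensions are bounded. The main obstacle I anticipate is the step from average-case to worst-case fidelity, together with the swap-test normalisation. The continuity and full-support argument above handles the first. For the second I would state the convention explicitly, so that value $1$ is taken to mean that the fidelity term equals $1$.
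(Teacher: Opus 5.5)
Your proof is correct and follows the same route as the paper's: write the acceptance probability as $\frac{1}{2}+\frac{1}{2}\expec_\psi\langle\psi|\sigma_\psi|\psi\rangle$, deduce perfect recovery of every $\psi$ from value one and read off a zero-error code (and conversely), then handle the unassisted case by taking $\rho_{AB}=\rho_A\otimes\rho_B$. Your continuity/full-support step, the extension to mixed inputs by linearity, the purification of a mixed shared state, and the attainment of the local maximum fill in details the paper passes over. One correction: the swap test's $\frac{1}{2}$ floor is no obstacle, because $\frac{1}{2}+\frac{1}{2}\expec_\psi\langle\psi|\sigma_\psi|\psi\rangle=1$ holds exactly when the fidelity term equals $1$, so no change of convention or of the verifier's test is needed, and what you prove is the literal statement $\omega(H_{seq};S)=1$.
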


\begin{proof}
    We define the winning probability of $H_{seq}$ for a strategy $S$ following from the description of the game as $$\omega^{\ast}(H_{seq};S) \coloneqq \expec\limits_{\psi} \frac{1}{2} + \frac{1}{2} \braket{\psi}{\mc{D} \circ (\mc{N} \circ \mc{E}) \otimes I_B) (\psi \otimes \rho_{AB})}{\psi}.$$ 
    
    If $\omega^{\ast}(H_{seq};S) = 1,$ then $\expec\limits_{\psi} \frac{1}{2} + \frac{1}{2} \braket{\psi}{\mc{D} \circ (\mc{N} \circ \mc{E}) \otimes I_B) (\psi \otimes \rho_{AB})}{\psi} = 1$
    which means that $\braket{\psi}{\mc{D} \circ (\mc{N} \circ \mc{E}) \otimes I_B) (\psi \otimes \rho_{AB})}{\psi} = 1$ for each $\psi.$ This implies that $\mc{D} \circ (\mc{N} \circ \mc{E}) \otimes I_B) (\psi \otimes \rho_{AB}) = \psi.$ It then follows by definition that $Q_{0, E}^{(1)} (\mc{N}) \geq \log \abs*{M}.$ 
    
    For the converse, consider the communication protocol specified by encoding channel \\ $\mc{E} : B(\mc{H}_A \otimes \C^M) \to B(\mc{H}_a)$, the noisy channel $\mc{N}$ and a decoding channel $\mc{D} : B(\mc{H}_b) \otimes B(\mc{H}_B) \to B(\C^M).$ Let $Q^{(1)}_{0, E} \geq \log \abs*{M}.$ This means that for every $\psi$, $\mc{D} \circ (\mc{N} \circ \mc{E}) \otimes I_B) (\psi \otimes \rho_{AB}) = \psi.$ Therefore, for each $\psi$, $\braket{\psi}{\mc{D} \circ (\mc{N} \circ \mc{E}) \otimes I_B) (\psi \otimes \rho_{AB})}{\psi} = 1$. This implies \\ $\expec\limits_{\psi} \frac{1}{2} + \frac{1}{2} \braket{\psi}{\mc{D} \circ (\mc{N} \circ \mc{E}) \otimes I_B) (\psi \otimes \rho_{AB})}{\psi} = 1$.

    For the second claim with $\omega(H_{seq})$ the same proof gives the result considering the shared state $\rho_{AB} = \rho_A \otimes \rho_B$ to be separable.
\end{proof}

\begin{definition}[Parallel 1-round game] \label{def:game-qu}
    Let $H_{par}$ be a quantum game where $V, A, B$ denote the systems of the verifier, and players Alice and Bob respectively. Let $\mc{N}$ be a noisy quantum channel. The game is played as follows:
    \begin{enumerate}
        \item $V$ samples $\psi$ from the uniform distribution on pure states in $\C^M$ and sends it to $A$, and prepares $\nu_{ab}$, the Choi state of $\mc{N}$ and sends part $b$ to $B.$
        \item $A$ sends $\rho_A$ to $V$ which is her share of the entangled state $\rho_{AB}.$
        \item $B$ applies a channel, $\mc{D}$ to $Bb$ and sends the outcome $\sigma$ to $V.$
        \item $V$ measures in the Bell basis of $Aa$. If the outcome is not $\ket{\phi^{+}}$ then $V$ accepts. If the outcome is $\ket{\phi^{+}}$ then $V$ accepts if $\psi \otimes \sigma$ passes the swap test.
        \item The players win when $V$ accepts.
    \end{enumerate}

    The strategy needed to play this game is as follows:
    \begin{enumerate}
        \item $A$ and $B$ share an entangled state $\rho_{AB}.$ 
        \item $A$ has an encoding channel $\mc{E} : B(\mc{H}_A \otimes \C^M) \to B(\mc{H}_a).$
        \item $B$ performs a decoding channel $\mc{D} : B(\mc{H}_b \otimes \mc{H}_B) \to B(\C^M)$.
    \end{enumerate}
\end{definition}

\begin{lemma}[Multi-to-single round equivalence] \label{thm:ent-val-quant} Given a channel $\mc{N}$ there exists a quantum game $H_{par}$ such that the following hold true:
    \begin{enumerate}
        \item $1 - \frac{1 - \omega^{\ast}(H_{seq})}{d^2_a} = \omega^{\ast}(H_{par}).$
        \item $1 - \frac{1 - \omega(H_{seq})}{d^2_a} = \omega(H_{par}).$
    \end{enumerate}
\end{lemma}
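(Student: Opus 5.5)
The plan is to follow the proof of \cref{thm:game-prot-val} almost verbatim, replacing the classical decoding POVM by the decoding channel $\mc{D}$ and the predicate $m=m'$ by the SWAP test. We fix an arbitrary strategy $S=(\rho_{AB},\mc{E},\mc{D})$ for $H_{par}$. Each such triple is also a strategy for $H_{seq}$, and conversely, so strategies for the two protocols are in bijection. Under this bijection, local strategies (with $\rho_{AB}=\rho_A\otimes\rho_B$) correspond to local ones, and entanglement-assisted strategies to entanglement-assisted ones. It therefore suffices to prove, for every $S$,
\begin{equation*}
\omega(H_{par};S)=1-\frac{1}{d_a^2}+\frac{1}{d_a^2}\,\omega(H_{seq};S).
\end{equation*}
Both claims then follow by taking the supremum over the respective strategy classes, since the map $x\mapsto 1-(1-x)/d_a^2$ is increasing and continuous.

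To prove the identity, we split on the outcome of the verifier's Bell measurement on $Aa$, where $A$ here denotes the register Alice returns after applying $\mc{E}$ to $\psi\otimes\rho_A$. If the outcome is not $\ket{\phi^+}$, the verifier accepts unconditionally. The key computation is the teleportation identity already used in \cref{thm:game-prot-val}:
\begin{equation*}
d_a^2\,\Tr_{Aa}\!\bigl[(\ketbra{\phi^+}_{Aa}\otimes I)(\nu_{ab}\otimes\tau_{AB})\bigr]=(\mc{N}\otimes\id_B)(\tau_{AB}),
\end{equation*}
applied with $\tau=(\mc{E}\otimes\id_B)(\psi\otimes\rho_{AB})$. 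Taking the trace of both sides shows that the $\ket{\phi^+}$ outcome occurs with probability exactly $1/d_a^2$, independently of $\psi$ and of the strategy. It also shows that, conditioned on this outcome, Bob's registers $bB$ are in the state $(\mc{N}\circ\mc{E}\otimes\id_B)(\psi\otimes\rho_{AB})$. Hence Bob's output is
\begin{equation*}
\sigma_\psi=\mc{D}\bigl((\mc{N}\circ\mc{E}\otimes\id_B)(\psi\otimes\rho_{AB})\bigr),
\end{equation*}
which is exactly his output in $H_{seq}$. The SWAP test on $\psi\otimes\sigma_\psi$ then accepts with probability $\frac12+\frac12\braket{\psi}{\sigma_\psi}{\psi}$. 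Averaging over $\psi$ gives $\omega(H_{seq};S)$. Combining the two branches yields the displayed identity.

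One point needs care and is the main potential obstacle. The verifier's measurement on $Aa$ and Bob's channel $\mc{D}$ act on disjoint registers, so they commute, and conditioning on the Bell outcome is legitimate regardless of operation order. However, the "accept" event in the non-$\phi^+$ branch must not depend on $\sigma$. We therefore write the overall accepting projector as
\begin{equation*}
(I-\ketbra{\phi^+})_{Aa}\otimes I+\ketbra{\phi^+}_{Aa}\otimes\Pi_{\mathrm{SWAP}},
\end{equation*}
and evaluate each term separately, using linearity in $\psi$ for the expectation. The only other subtlety is that the output of $\mc{E}$, call it $A$, must have dimension $d_a$ matching the Choi state's input register. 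This is built into the definition of the game, just as it is in \cref{thm:game-prot-val}.
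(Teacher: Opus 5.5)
Your proposal is correct and follows essentially the same route as the paper. Like the paper, you fix a strategy $(\rho_{AB},\mathcal{E},\mathcal{D})$, split on whether the Bell outcome on $Aa$ is $\phi^+$, and use the teleportation-through-the-Choi-state identity to reduce the $\phi^+$ branch to $\omega(H_{seq};S)$, then combine the branches with weights $1-1/d_a^2$ and $1/d_a^2$. Your extra details make explicit steps the paper leaves implicit: the bijection between strategy classes, deriving the $1/d_a^2$ probability by tracing the identity, and using the increasing affine map to pass to the supremum.
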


\begin{proof}
    Fix a strategy $S$ specified by $\rho_{AB}, \mc{E}, \mc{D}$. Now consider two cases, first, where the measurement outcome of the verifier $V$ is not $\ket{\phi^{+}},$ and second, where the measurement outcome is $\ket{\phi^{+}}$. In the first case, the strategy succeeds with probability $1$, and this case occurs with probability $1 - \frac{1}{d^2_a}.$ For the second case, let $\mathfrak{w}$ be the winning probability conditioned on the verifier measuring $\ket{\phi^+}.$ We will show that $\mathfrak{w} = \omega^{\ast}(H_{seq};S)$. The winning probability of $H_{par}$ in this case is $\mathfrak{w} \coloneqq \expec\limits_{\psi} \frac{1}{2} + \frac{1}{2} \braket{\psi}{\mc{D} \circ (\mc{N} \circ \mc{E}) \otimes I_B) (\psi \otimes \rho_{AB})}{\psi}.$ Here, the distribution is uniform over pure states $\psi.$
    We know that $\nu_{ab} = (I \otimes \mc{N})(\ketbra{\phi^+}_{aa'})$ is the Choi state of $\mc{N}.$ To simplify this expression, first note that
    \begin{align*}
        &d_a^2 \Tr_{Aa} (\ketbra{\phi^+}_{Aa} (\nu_{ab} \otimes \mc{E}^A(\psi \otimes \rho_{AB}))) \\
        &= \Tr_{Aa} \sum_{i,i',j,j'} \ketbra{ii}{i'i'}_{Aa}(\ketbra{j}{j'}_{a}\otimes\mc{N}(\ketbra{j}{j'}_{b})\otimes\mc{E}^A(\psi \otimes \rho_{AB})) \\
        &= \Tr_A \sum_{i,j} \ketbra{i}{j}_A (\mc{N}(\ketbra{j}{i}_b) \otimes \mc{E}^A(\psi \otimes \rho_{AB})) \\
        &= (\mc{N} \otimes I_B)(\mc{E}^A(\psi \otimes \rho_{AB})).
    \end{align*}
    Therefore, the winning probability $\mathfrak{w} = \expec\limits_{\psi} \frac{1}{2} + \frac{1}{2} \braket{\psi}{\mc{D} \circ (\mc{N} \circ \mc{E}) \otimes I_B) (\psi \otimes \rho_{AB})}{\psi} = \omega^{\ast}(H_{seq};S).$ Finally, we note that the second case occurs with probability $\frac{1}{d_a^2}.$ Combining the probabilities of the two cases we get that $1 - \frac{1}{d_a^2} \cdot 1 + \frac{1}{d_a^2} \cdot \omega^{\ast}(H_{seq};S) = \omega^{\ast}(H_{par};S)$ which gives the wanted result.
    For the second claim relating $\omega(H_{seq})$ with $\omega(H_{par})$ the same proof gives the result considering the shared state $\rho_{AB} = \rho_A \otimes \rho_B$ to be separable.
\end{proof}

\section{From channel superactivation to interactive quantum protocols}

In this section, we will show that the quantum multiprover interactive protocols for the zero-error classical (\cref{thm:yipee}, \cref{cor:all-n-yay}) and quantum capacity (\cref{thm:phenom-quant}), $G_{seq}$ and $H_{seq}$ respectively, also admit the phenomenon of extreme violation of parallel repetition, i.e., \textit{superactivated cheating}. This is a multi-round analogue of what \cref{thm:foul-play} exhibits for a ($1$-round) quantum game. Superactivation of channel capacities allows the players to utilise the entanglement in the question state to win perfectly in $n$-instances of parallel repetition in spite of not winning perfectly in a single instance. This defies the traditional notion of parallel repetition for interactive proof systems or games and their local value.

\subsection{Cheating in the classical capacity QMIP protocol} \label{sec:cheat}

First, we note the correct setting for superactivation in zero-error classical capacities. We consider the unassisted zero-error classical capacity of a quantum channel, $C_{0} (\mc{N})$ in the asymptotic setting. The corresponding quantum protocol is still the same as $G_{seq}$ (\cref{def:cl-cap-2ro}) except now the state preshared by Alice and Bob is separable, $\rho_{AB} = \rho_A \otimes \rho_B$.

\begin{lemma}[Theorem 2, \cite{CS12}] \label{lem:superact}
    There exist channels $\mc{E}_1, ~\mc{E}_2$ with $C_0(\mc{E}_1) = C_0(\mc{E}_2) = 0$ such that $C_0(\mc{E}_1 \otimes \mc{E}_2) \geq 1.$
\end{lemma}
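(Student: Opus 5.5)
The plan is to work entirely at the level of \emph{non-commutative confusability graphs}, since the zero-error classical capacity of a channel depends only on this object. For a channel $\mc{E}$ with Kraus operators $\{K_i\}$, set $S_{\mc{E}}=\spn\{K_i^\dagger K_j\}$. This is an operator system: it is self-adjoint and contains $I$. Two input states $\ket{\psi},\ket{\varphi}$ have orthogonal outputs iff $\braket{\psi}{X}{\varphi}=0$ for all $X\in S_{\mc{E}}$, i.e.\ iff $\ketbra{\varphi}{\psi}\perp S_{\mc{E}}$ in the Hilbert--Schmidt inner product. Moreover $S_{\mc{E}_1\otimes\mc{E}_2}=S_{\mc{E}_1}\otimes S_{\mc{E}_2}$, and every operator system arises as $S_{\mc{E}}$ for some channel. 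Consequently:
\begin{itemize}
\item $C_0(\mc{E})=0$ holds iff for no $n$ does $(S_{\mc{E}}^{\otimes n})^{\perp}$ contain a rank-one operator;
\item $C_0(\mc{E}_1\otimes\mc{E}_2)\ge 1$ follows once $(S_1\otimes S_2)^{\perp}$ contains a rank-one operator $\ketbra{\varphi}{\psi}$ with $\psi\perp\varphi$.
\end{itemize}
This holds because one use of the joint channel then sends one bit with zero error, and the regularised capacity is at least the one-shot value. So the lemma reduces to constructing two operator systems $S_1,S_2$ with these properties.

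\textbf{Step 1 (activation).} I would take both inputs to be $\C^d\otimes\C^d$-bipartite across the two channels and use two orthogonal maximally entangled states, $\ket{\phi^+}$ and $(I\otimes U)\ket{\phi^+}$ for a traceless unitary $U$. Using the identity $\braket{\phi^+}{(X\otimes Y)(I\otimes U)}{\phi^+}=\tfrac1d\Tr(X^T Y U)$, the orthogonality condition for the outputs becomes $\Tr(X^T Y U)=0$ for all $X\in S_1$ and $Y\in S_2$. I would therefore choose $S_2$ inside the orthogonal complement of $\{X^TU : X\in S_1\}^{\dagger}$. Concretely, I would fix $S_1$ first and set $S_2=\bigl(U^\dagger\, \overline{S_1}\bigr)^{\perp}\cap(\text{self-adjoint, }I\text{-containing})$, adjusting $U$ and the dimensions so that $S_2$ is still an operator system. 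This is a linear-algebra bookkeeping step.

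\textbf{Step 2 (each channel alone has zero capacity for all $n$).} This is the main obstacle, because it must hold uniformly in the number of copies $n$. A dimension count on a generic subspace only controls $n=1$. My first attempt would be a tensor-stable sufficient condition: $(S^{\otimes n})^{\perp}$ contains no rank-one $\ketbra{\varphi}{\psi}$ with $\psi\perp\varphi$ whenever $S$ contains a family of operators rich enough that $\braket{\psi}{X}{\varphi}=0$ for all $X\in S^{\otimes n}$ forces $\psi,\varphi$ to be non-orthogonal. Two candidate families are:
\begin{itemize}
\item $S\supseteq\{\ketbra{v}\}$ for a set of vectors $\{v\}$ whose rank-one projectors span the full matrix algebra on a ``large'' subspace;
\item a suitable set of positive-definite operators.
\end{itemize}
The aim is to run an induction on $n$ that peels off one tensor factor at a time, using a Schmidt decomposition of $\psi$ and $\varphi$ across the last factor. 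Following Cubitt--Chen--Harrow, I expect the workable form to be: choose $S_i$ so that $S_i^{\perp}$, viewed as a space of matrices, contains no rank-one element and has a structure (e.g.\ built from a generic subspace tensored with a fixed ``anti-symmetric'' piece) that is preserved under tensor powers. The choice of $U$ and the transpose in Step 1 must be made compatible with this structure for both $S_1$ and $S_2$ simultaneously. If the direct induction fails, I would fall back on their dimension/genericity argument over the Segre variety, applied to the specific algebraic form of $S_i^{\otimes n}$.

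\textbf{Step 3 (conclude).} Combining the two steps, the operator systems $S_1,S_2$ are realised by channels $\mc{E}_1,\mc{E}_2$ with $C_0(\mc{E}_1)=C_0(\mc{E}_2)=0$. The orthogonal pair from Step 1 gives $C_0^{(1)}(\mc{E}_1\otimes\mc{E}_2)\ge1$, and hence $C_0(\mc{E}_1\otimes\mc{E}_2)\ge1$.
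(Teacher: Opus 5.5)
The paper gives no proof of this lemma; it imports it as Theorem~2 of \cite{CS12}. Your reduction to non-commutative graphs is sound: outputs are orthogonal iff $\lvert\varphi\rangle\langle\psi\rvert\perp S_{\mc{E}}$, $S_{\mc{E}_1\otimes\mc{E}_2}=S_1\otimes S_2$, every operator system is realised by a channel, and $C_0\ge C_0^{(1)}$. Activating with $\lvert\phi^+\rangle$ and $(I\otimes U)\lvert\phi^+\rangle$ via $\operatorname{Tr}(X^TYU)=0$ is also the right kind of mechanism. However, there is a genuine gap. The substance of the lemma is your Step~2: that $(S_i^{\otimes n})^\perp$ contains no rank-one operator for \emph{every} $n$. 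There you only list candidate conditions, an induction you would attempt, and a fallback to ``their dimension/genericity argument''. That fallback is the cited theorem itself; the uniform-in-$n$ statement is exactly what \cite{CS12} obtains by combining algebraic geometry with a probabilistic genericity argument. As written, the proposal assumes what it has to prove.

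Moreover, the one concrete candidate you name is incompatible with your Step~1. Suppose $S_1$ contains any rank-one operator $\lvert v\rangle\langle w\rvert$. Your activation condition then gives $\langle\bar v\rvert YU\lvert\bar w\rangle=\operatorname{Tr}\big((\lvert v\rangle\langle w\rvert)^TYU\big)=0$ for all $Y\in S_2$. So $\mc{E}_2$ sends $\bar v$ and $U\bar w$ to orthogonal outputs, and $C_0^{(1)}(\mc{E}_2)\ge 1$. Hence, under your activation, $S_1$, $S_2$, $S_1^\perp$ and $S_2^\perp$ must all be free of rank-one operators, which excludes the family $S\supseteq\{\lvert v\rangle\langle v\rvert\}$.

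That family is in fact useless for any activation. Suppose $S_1$ contains the projectors onto $2d_1-1$ vectors $v_k$ in general position (any $d_1$ of them spanning the input space of $\mc{E}_1$), and $C_0^{(1)}(\mc{E}_2)=0$. Slicing $\psi,\varphi$ by $\langle v_k\rvert\otimes I$ and pigeonholing forces $\psi=0$ or $\varphi=0$, so $C_0^{(1)}(\mc{E}_1\otimes\mc{E}_2)=0$. The same induction that makes this condition tensor-stable also makes it non-activatable.

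Step~1 is also not mere bookkeeping. $I\in S_2$ requires $\operatorname{Tr}(X^TU)=0$ for all $X\in S_1$. Self-adjointness of $(\overline{S_1}U^\dagger)^\perp$ requires $US_1^T=S_1^TU^\dagger$. Both constrain the pair $(S_1,U)$ and cannot be imposed afterwards by intersecting; shrinking $S_2$ would in any case only make $C_0(\mc{E}_2)=0$ harder.

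The missing piece is an argument that works inside this constrained family: rank-one-free operator systems with rank-one-free complements, coupled through $U$. For such systems you would need to show that $(S_i^{\otimes n})^\perp$ contains no rank-one operator for all $n$ simultaneously.
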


Since $C_0(\mc{E}) = \sup\limits_k \frac{C_0^{(1)}(\mc{E}^{\otimes k})}{k}$, it follows that $C_0(\mc{E}) = 0 $ if and only $ C_0^{(1)}(\mc{E}^{\otimes k}) = 0$ for all $k$. Therefore, we have the following corollary.

\begin{corollary} \label{cor:viol-par}
    Let $\mc{E}_1, \mc{E}_2$ be as in \cref{lem:superact}. For every positive integer $n$, there exists $k>1$ such that $C_0^{(1)}(\mc{E}_1^{\otimes k}) = C_0^{(1)}(\mc{E}_2^{\otimes k})=0$ and $C_0^{(1)}(\mc{E}^{\otimes k}_1 \otimes \mc{E}^{\otimes k}_2) \geq n$. 
\end{corollary}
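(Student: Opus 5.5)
The plan is to combine the two facts just recorded: zero asymptotic capacity forces zero one-shot capacity on every tensor power, and the asymptotic capacity of the product is positive. The second fact will be upgraded to an arbitrarily large one-shot capacity on a suitable tensor power.

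First, since $C_0(\mc{E}_1)=C_0(\mc{E}_2)=0$, the identity $C_0(\mc{E})=\sup_k C_0^{(1)}(\mc{E}^{\otimes k})/k$ gives $C_0^{(1)}(\mc{E}_i^{\otimes k})=0$ for every $k$ and $i=1,2$. So the first two conditions of the corollary hold for any $k$, and only the lower bound on the product needs work.

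Second, set $\mc{F}=\mc{E}_1\otimes\mc{E}_2$. By \cref{lem:superact}, $C_0(\mc{F})=\sup_k C_0^{(1)}(\mc{F}^{\otimes k})/k\ge 1$. Up to a reordering of tensor factors, which does not change any capacity, $\mc{F}^{\otimes k}\cong \mc{E}_1^{\otimes k}\otimes\mc{E}_2^{\otimes k}$.

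Third, pass from the supremum to large one-shot values. The one-shot zero-error capacity is superadditive: if $C_0^{(1)}(\mc{F}^{\otimes k_0})=c>0$, then a zero-error code with $2^c$ messages for $\mc{F}^{\otimes k_0}$, used $t$ times in product (product encodings, product decoding measurements), gives a zero-error code for $\mc{F}^{\otimes tk_0}$ with $2^{tc}$ messages. Hence $C_0^{(1)}(\mc{F}^{\otimes tk_0})\ge tc$. It therefore suffices to find one $k_0$ with $C_0^{(1)}(\mc{F}^{\otimes k_0})>0$. Since the supremum is at least $1$, some $k_0$ has $C_0^{(1)}(\mc{F}^{\otimes k_0})>0$. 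Moreover, $C_0^{(1)}$ is the logarithm of an integer number of messages, so a positive value is at least $1$. Choosing $t\ge \max(n,2)$ and $k=tk_0>1$ then gives $C_0^{(1)}(\mc{E}_1^{\otimes k}\otimes\mc{E}_2^{\otimes k})\ge t\ge n$.

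The main subtlety is this last step, where one must check that product codes remain zero-error. Orthogonality of output states is preserved under tensor products, because $\Tr(\rho\otimes\sigma\,\rho'\otimes\sigma')=\Tr(\rho\rho')\Tr(\sigma\sigma')$, so distinct message tuples yield orthogonal outputs. Everything else is bookkeeping.
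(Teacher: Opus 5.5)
Your proposal is correct and follows the same route as the paper, which gets the corollary directly from the identity $C_0(\mc{E})=\sup_k C_0^{(1)}(\mc{E}^{\otimes k})/k$ applied to $\mc{E}_1$, $\mc{E}_2$ and $\mc{E}_1\otimes\mc{E}_2$. The paper leaves implicit why $C_0^{(1)}(\mc{E}_1^{\otimes k}\otimes\mc{E}_2^{\otimes k})$ can be made at least $n$. Your product-code (superadditivity) step supplies exactly that missing detail, using that a positive $C_0^{(1)}$ is at least $1$ and choosing $t\ge\max(n,2)$ to force $k>1$.
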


The following lemma relates parallel repetition of $G_{seq}$ with running the protocol over the joint channel.

\begin{proposition} \label{thm:par-rep}
    Let $G_{seq}(\mc{N}, M)$ be the QMIP protocol for quantum channel $\mc{N}$ and number of messages $M.$ Then for any channels $\mc{N}_1, \mc{N}_2$ with messages $M_1, M_2$ respectively 
    \begin{equation}
        G_{seq}(\mc{N}_1, M_1) \times G_{seq}(\mc{N}_2, M_2) = G_{seq}(\mc{N}_1 \otimes \mc{N}_2, M_1 \times M_2).
    \end{equation}
\end{proposition}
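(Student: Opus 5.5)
We prove the identity by unfolding both sides step by step and matching them: same question distribution, same verifier actions, same acceptance predicate. The two protocols should then coincide as $\mathsf{QMIP}$ protocols, up to a relabelling of registers. Following the convention for parallel repetition of $\mathsf{QMIP}$ protocols, the verifier of the product runs the two verifiers' channels side by side, $\mc{M}_1\otimes\mc{M}_2$, round by round.

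\textbf{Round one.} In $G_{seq}(\mc{N}_1,M_1)\times G_{seq}(\mc{N}_2,M_2)$ the verifier samples $m_1\leftarrow[M_1]$ and $m_2\leftarrow[M_2]$ independently and uniformly, and sends both to Alice. Identifying $[M_1]\times[M_2]$ with the message set of the right-hand protocol, the pair $(m_1,m_2)$ is exactly a uniform sample from $[M_1]\times[M_2]$. So the first question is the same on both sides.

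\textbf{Round two.} Alice returns registers $a_1a_2$, which is a single register $a=a_1a_2$ with $\mc{H}_a=\mc{H}_{a_1}\otimes\mc{H}_{a_2}$. The product verifier applies $\mc{N}_1$ to $a_1$ and $\mc{N}_2$ to $a_2$. This is precisely the application of $\mc{N}_1\otimes\mc{N}_2:B(\mc{H}_{a_1}\otimes\mc{H}_{a_2})\to B(\mc{H}_{b_1}\otimes\mc{H}_{b_2})$ to $a$. The resulting $b_1b_2$ is sent to Bob as a single question register. Bob answers $(m_1',m_2')$.

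\textbf{Acceptance and strategies.} The product accepts iff $m_1=m_1'$ and $m_2=m_2'$, that is, iff $(m_1,m_2)=(m_1',m_2')$. This is the winning condition of $G_{seq}(\mc{N}_1\otimes\mc{N}_2,M_1\times M_2)$. On the prover side, a strategy for the product is a shared state $\rho_{AB}$, a family of encoders $\mc{E}_{(m_1,m_2)}:B(\mc{H}_A)\to B(\mc{H}_{a_1}\otimes\mc{H}_{a_2})$, and a POVM $\{D_{(m_1',m_2')}\}$ on $\mc{H}_B\otimes\mc{H}_{b_1}\otimes\mc{H}_{b_2}$. This is exactly the strategy format of \cref{def:cl-cap-2ro} for the joint channel. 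The correspondence is a bijection, it preserves whether $\rho_{AB}$ is product, and it preserves winning probabilities, since
\[
\frac{1}{M_1M_2}\sum_{m_1,m_2}\Tr\bigl(D_{(m_1,m_2)}((\mc{N}_1\otimes\mc{N}_2)\circ\mc{E}_{(m_1,m_2)}\otimes\mathrm{id}_B)(\rho_{AB})\bigr)
\]
is the same expression on both sides.

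\textbf{Main point of care.} The one step needing care is that, in the repeated protocol, the provers may act jointly across the two copies. Alice's encoder need not factor as $\mc{E}_{m_1}\otimes\mc{E}_{m_2}$, and Bob's POVM need not be a product. We have to allow fully general joint maps on both sides, not just tensor-product strategies. With that allowance, and reading ``$=$'' as equality of protocols up to this canonical identification of registers and strategy sets, the equality of local and entangled values follows immediately.
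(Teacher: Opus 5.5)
Your proof is correct and takes the same approach as the paper. Both unfold the product protocol round by round. Both observe that the verifier's action on Alice's joint output register is the channel $\mathcal{N}_1\otimes\mathcal{N}_2$, and that the conditions $m_1=m_1'$ and $m_2=m_2'$ together amount to equality in $[M_1]\times[M_2]$, which is identified with $[M_1M_2]$. You go a bit further than the paper by spelling out two things it leaves implicit: the bijection of general joint strategies, including preservation of product shared states, and the equality of the winning-probability expressions.
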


\begin{proof}
    Consider the QMIP protocol $G_{seq}(\mc{N}_1, M_1) \times G_{seq}(\mc{N}_2, M_2)$. Let $m_1, m_2$ be messages received by $A$ from $V$ and let $\rho_{A_1A_2}$ be the output state of $A$ that is returned to $V$. $V$ acts with $\mc{N}_1 \otimes \mc{N}_2 (\rho_{A_1A_2}) = \rho_{B_1B_2}$ which is the question sent to $B.$ $B$ after receiving $\rho_{B_1B_2}$ sends answers $m_1', m_2'$ to $V.$ $A$ and $B$ win if and only if $m_1 = m_1'$ and $m_2 = m_2'.$ The above QMIP protocol describes the exact protocol $G_{seq}(\mc{N}_1 \otimes \mc{N}_2, M_1 \times M_2)$ where $V$ acts with the joint channel $\mc{N}_1 \otimes \mc{N}_2$ and the set of messages $[M_1 M_2]$ is in bijection with the joint set $[M_1] \times [M_2].$
\end{proof}

An immediate corollary shows that parallel repetition does not hold for the local value of two distinct $\mathsf{QMIP}$ protocols. This is a stepping stone in the direction which ultimately shows the extreme violation of parallel repetition for the local value of the same protocol.

\begin{corollary}
    $\exists$ QMIP protocols $G_1, G_2$ with $\omega(G_1) < 1$ and $\omega(G_2) < 1$ but $\omega(G_1 \times G_2) = 1.$
\end{corollary}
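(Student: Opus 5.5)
We combine \cref{cor:viol-par}, \cref{thm:ch-pr-cor} and \cref{thm:par-rep}. Fix $\mc{E}_1,\mc{E}_2$ as in \cref{lem:superact}. Apply \cref{cor:viol-par} with $n=1$ to obtain $k>1$ such that $C_0^{(1)}(\mc{E}_1^{\otimes k})=C_0^{(1)}(\mc{E}_2^{\otimes k})=0$ and $C_0^{(1)}(\mc{E}_1^{\otimes k}\otimes\mc{E}_2^{\otimes k})\ge 1$. Set $\mc{N}_i:=\mc{E}_i^{\otimes k}$ and define $G_i:=G_{seq}(\mc{N}_i,M_i)$ with $M_1=2$ and $M_2=1$. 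Then $M_1M_2=2$, which matches the one bit of zero-error capacity available for the joint channel.

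The first step is to show $\omega(G_1)<1$. Since $C_0^{(1)}(\mc{N}_1)=0<\log 2$, part 2 of \cref{thm:ch-pr-cor} gives $\omega(G_1)\ne 1$. Because $\omega(G_1)\le 1$, this yields $\omega(G_1)<1$.

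The second step concerns $G_2$, and this is where the choice $M_2=1$ causes trouble. With $M_2=1$ the game $G_2$ is trivially won, so we would get $\omega(G_2)=1$, contradicting the requirement $\omega(G_2)<1$. We need $M_2\ge 2$. The fix is to run \cref{cor:viol-par} with $n=2$, so that $C_0^{(1)}(\mc{N}_1\otimes\mc{N}_2)\ge 2=\log 4$, and take $M_1=M_2=2$. Then $C_0^{(1)}(\mc{N}_i)=0<\log 2$, and \cref{thm:ch-pr-cor} gives $\omega(G_1)<1$ and $\omega(G_2)<1$.

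The third step handles the product game. By \cref{thm:par-rep},
\[
G_1\times G_2=G_{seq}(\mc{N}_1\otimes\mc{N}_2,\,4).
\]
Since $C_0^{(1)}(\mc{N}_1\otimes\mc{N}_2)\ge\log 4$, the converse direction of \cref{thm:ch-pr-cor} gives $\omega(G_1\times G_2)=1$.

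The main obstacle is this converse direction for the local value. A zero-error code for $\mc{N}_1\otimes\mc{N}_2$ may use encodings that are entangled across the two input registers. We must check that such a code is still a legitimate \emph{local} strategy for the repeated protocol. It is, because Alice receives both messages $(m_1,m_2)$ jointly and may output any state on $A_1A_2$. Bob likewise receives both channel outputs jointly and can apply a single joint decoding POVM. So no shared entanglement is needed, and the entanglement across copies is supplied by Alice's own encoding. The remaining minor check is the convention for messages in the product protocol. The bijection $[M_1]\times[M_2]\cong[M_1M_2]$ maps the uniform distribution on pairs to the uniform distribution on $[4]$, as used in \cref{thm:par-rep}.
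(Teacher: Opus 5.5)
Your final argument is correct and is exactly the paper's proof: take $n=2$ in \cref{cor:viol-par}, set $G_i=G_{seq}(\mc{E}_i^{\otimes k},2)$, and combine part 2 of \cref{thm:ch-pr-cor} with \cref{thm:par-rep}. Your check that a code entangled across the two input registers is still a legitimate local strategy for the repeated protocol is what \cref{thm:par-rep} encodes. In a write-up, drop the abandoned $M_2=1$ attempt.
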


\begin{proof}
    Let $\mc{E}_1, \mc{E}_2, k$ be as in \cref{cor:viol-par}. Now let $G_1 = G_{seq}(\mc{E}_1^{\otimes k}, 2)$ and $G_2 = G_{seq}(\mc{E}_2^{\otimes k}, 2)$. Since $C_0^{(1)}(\mc{E}_1^{\otimes k}) = 0$ and $C_0^{(1)}(\mc{E}_2^{\otimes k}) = 0$, $\omega(G_1) < 1$ and $\omega(G_2) < 1.$ But by \cref{cor:viol-par} we know $C_0^{(1)}(\mc{E}_1^{\otimes k} \otimes \mc{E}_2^{\otimes k}) \geq 2.$ This implies $\omega(G_{seq}(\mc{E}_1^{\otimes k} \otimes \mc{E}_2^{\otimes k}, 4)) = 1$. Now applying this fact to \cref{thm:par-rep} we get, $\omega(G_1 \times G_2) = 1.$
\end{proof}

We are now going to describe a technical invention that will helps us establish our main result. Intuitively, this is a direct sum construction for channels such that any possible correlations between two channel outputs is broken by the direct sum of the two channels, which we call the \emph{one time pad} direct sum. This will enable us to establish that a failure of perfect execution in one instance is mitigated by superactivation of the same resources when interacting over multiple rounds leading to the failure of parallel repetition.

\begin{definition}[One-time pad direct sum] \label{def:OTP-dir-sum}
    Let $\Phi_0, \Phi_1$ be channels with a common output space. We define $\Phi_0 \oplus_{\mathrm{OTP}} \Phi_1$ as the channel that acts on a state
    $$ \rho = \begin{pmatrix} \rho_{00} & \rho_{01} \\
    \rho_{10} & \rho_{11} \end{pmatrix}$$
    as
    \begin{align*}
        \Phi_0 \oplus_{\mathrm{OTP}} \Phi_1 (\rho)
        = \frac{1}{\abs{K}} \sum\limits_k \left( \Phi_0 (\rho_{00}) + U_k \Phi_1 (\rho_{11}) U_k^{\dag} \right) \otimes \ketbra{k}
    \end{align*}
    where $\{U_k\}_{k \in K}$ is a unitary $1$-design.    
\end{definition}

\begin{remark}\label{rmk:1-design}
    Recall that for unitary $1$-designs, $\frac{1}{\abs{K}}\sum_k U_k X U_k^\dagger = (\Tr(X)/d) I$. Moreover, if the channels $\Phi_0, \Phi_1$ in \cref{def:OTP-dir-sum} do not have a common output space, then it is always to embed the smaller output space into a larger dimension to obtain a common output space.
\end{remark}

We now note useful properties of the one-time-pad direct sum, which will later help prove our main result.

\begin{proposition}\label{lem:OTP-pos-cap}
    Let $\mc{E}_0:B(\mc{H}_0)\rightarrow B(\mc{K})$ and $\mc{E}_1:B(\mc{H}_1)\rightarrow B(\mc{K})$ be two quantum channels with a common output space, and let $\mc{N}=\mc{E}_0 \oplus_{\mathrm{OTP}} \mc{E}_1$. Then for 
every $n\geq 1$,
        \begin{equation}
            C^{(1)}_0(\mc{N}^{\otimes n})\geq \max\{C^{(1)}_0(\mc{E}_{s_1}\otimes\cdots\otimes\mc{E}_{s_n}): s\in\{0,1\}^n\}. \label{eq:n-geq}
        \end{equation}
Moreover, for $n=1$,
        \begin{equation}
            C^{(1)}_0(\mc{N})=\max\{C^{(1)}_0(\mc{E}_0),C^{(1)}_0(\mc{E}_1)\}. \label{eq:1-eq}
        \end{equation}
\end{proposition}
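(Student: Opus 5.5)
The plan is to prove \cref{eq:n-geq} by an explicit embedding of codes, and \cref{eq:1-eq} by combining that lower bound with an upper bound that exploits the one-time pad.

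\textbf{Lower bound \cref{eq:n-geq}.} Fix $s\in\{0,1\}^n$ and a zero-error code $\{\rho_m\}_{m\in[M]}$ for $\mc{E}_{s_1}\otimes\cdots\otimes\mc{E}_{s_n}$ on $\mc{H}_{s_1}\otimes\cdots\otimes\mc{H}_{s_n}$. We embed each $\mc{H}_{s_j}$ as the corresponding diagonal block of the input space $\mc{H}_0\oplus\mc{H}_1$ of $\mc{N}$. The embedded input $\tilde\rho_m$ is then supported entirely in the block indexed by $s$. For such inputs, each tensor factor of $\mc{N}$ acts as follows: on the $\rho_{00}$ block it applies $\mc{E}_0$, and on the $\rho_{11}$ block it applies $\mc{E}_1$ followed by $U_{k_j}$, with the key $k_j$ recorded in a classical register. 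Hence
\[
\mc{N}^{\otimes n}(\tilde\rho_m)=\frac{1}{|K|^n}\sum_{k}\Bigl(\bigotimes_{j:s_j=1}U_{k_j}\Bigr)(\mc{E}_{s_1}\otimes\cdots\otimes\mc{E}_{s_n})(\rho_m)\Bigl(\bigotimes_{j:s_j=1}U_{k_j}\Bigr)^{\dagger}\otimes\ketbra{k}.
\]
The receiver reads $k$, undoes the unitaries, and runs the original decoder. Equivalently, for each fixed $k$ the blocks are unitary images of mutually orthogonal states, so they remain orthogonal, and states that are block-diagonal in the classical $k$ register stay orthogonal. This shows that $M$ messages are transmitted with zero error, which gives \cref{eq:n-geq}.

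\textbf{Upper bound for $n=1$.} The inequality $\geq$ in \cref{eq:1-eq} is the case $n=1$ of the above, so it remains to show $C_0^{(1)}(\mc{N})\le\max\{C^{(1)}_0(\mc{E}_0),C^{(1)}_0(\mc{E}_1)\}$. Take a zero-error code $\{\rho_m\}$ for $\mc{N}$. Averaging over $k$ and using \cref{rmk:1-design}, each output is
\[
\mc{N}(\rho_m)=\frac{1}{|K|}\sum_k\Bigl(\mc{E}_0(\rho_{m,00})+U_k\mc{E}_1(\rho_{m,11})U_k^\dagger\Bigr)\otimes\ketbra{k}.
\]
Orthogonality of outputs $m\neq m'$ is equivalent to orthogonality of their supports, and for positive operators this holds iff the trace of the product vanishes. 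The cross term is
\[
\frac{1}{|K|^2}\sum_k\Tr\Bigl[\bigl(\mc{E}_0(\rho_{m,00})+U_k\mc{E}_1(\rho_{m,11})U_k^\dagger\bigr)\bigl(\mc{E}_0(\rho_{m',00})+U_k\mc{E}_1(\rho_{m',11})U_k^\dagger\bigr)\Bigr]=0.
\]
It is a sum of four nonnegative traces, so each must vanish. The mixed term $\frac{1}{|K|}\sum_k\Tr[\mc{E}_0(\rho_{m,00})\,U_k\mc{E}_1(\rho_{m',11})U_k^\dagger]$ equals $\Tr(\rho_{m,11})\,\Tr(\rho_{m',00})\cdot\frac{\Tr\mc{E}_0(\cdot)}{d}$ up to normalisation by \cref{rmk:1-design}. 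This forces $\Tr(\rho_{m,00})\Tr(\rho_{m',11})=0$ for all $m\neq m'$. Consequently, if two or more codewords have nonzero weight on the $0$ block, then no codeword other than possibly one can have weight on the $1$ block, and more precisely every codeword that has $1$-weight must coincide with the unique codeword that has $0$-weight. A short case analysis gives the following. If $M\ge3$, then either all codewords have zero $1$-weight, or all have zero $0$-weight. The degenerate case where a single $m$ has both weights contributes at most one extra codeword, and I expect to handle it by noting that the remaining codewords already lie in one block. In the all-$0$ case, the $\mc{E}_0(\rho_{m,00})$ are pairwise orthogonal, so $M\le 2^{C_0^{(1)}(\mc{E}_0)}$; the all-$1$ case is symmetric and uses $\mc{E}_1$ after conjugation by a fixed $U_k$. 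For $M=2$ with mixed weights, we need the argument to still give $\log 2$, which is at most the max provided one of the $\mc{E}_i$ can send a bit.

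\textbf{Main obstacle.} The hard step is this mixed case: a code with one codeword supported on each block. In general it would give $\mc{N}$ one bit even when both $C_0^{(1)}(\mc{E}_i)=0$, unless the twirled output $\frac{1}{|K|}\sum_k U_k\mc{E}_1(\cdot)U_k^\dagger=I/d$ overlaps $\mc{E}_0(\rho_{00})$. It does, since $I/d$ has full support, and this full support is exactly what makes the mixed term strictly positive. So the key lemma to verify carefully is that the twirled block is full rank. With that lemma, any two codewords must lie in the same block, up to the degenerate one-codeword overlap, which the same positivity also excludes.
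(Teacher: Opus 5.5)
Your route is the paper's. The lower bound is the same embedding argument: for a fixed key the outputs are unitary conjugates of the original orthogonal outputs, and the key register is classical. For the upper bound you reach the same four-term expansion and the same constraint: with $a_m=\operatorname{Tr}\rho_{m,00}$ and $b_m=\operatorname{Tr}\rho_{m,11}$, $a_m b_{m'}=0$ for all $m\neq m'$.

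The endgame as written does not prove the upper bound. Your closing sentence lands on the right conclusion, but the intermediate deductions are wrong:
\begin{itemize}
\item Two codewords with $a>0$ force \emph{every} codeword to have $b=0$. Your sentence about ``the unique codeword that has $0$-weight'' contradicts its own hypothesis.
\item A codeword with $a_y,b_y>0$ does not ``contribute at most one extra codeword''; it cannot exist once $M\ge 2$. An extra-codeword bound would only give $M\le 2^{\max}+1$, which is not \cref{eq:1-eq}.
\item Your fallback for $M=2$ (``at most the max provided one of the $\mathcal{E}_i$ can send a bit'') fails exactly when both one-shot capacities vanish, which is the case \cref{thm:yipee} needs.
\end{itemize}

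The ``main obstacle'' is not an obstacle, and no extra lemma is needed. Full support of the twirl is just \cref{rmk:1-design}, and it already produced the cross term $a_mb_{m'}/\dim\mathcal{K}$. The finish, as in the paper, is short.
\begin{itemize}
\item For $M\ge 2$, suppose some $y$ had $a_y,b_y>0$. For any $x\neq y$, the constraints $a_yb_x=0$ and $a_xb_y=0$ give $a_x=b_x=0$, contradicting $a_x+b_x=1$.
\item Hence each codeword has $(a_x,b_x)\in\{(1,0),(0,1)\}$, and by positivity it is supported on a single block.
\item If some $x$ has $b_x=1$, then $a_yb_x=0$ gives $a_y=0$, i.e.\ $b_y=1$, for every $y\neq x$. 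So all codewords lie in the same block. Your one-codeword-per-block configuration is excluded because it would give $a_mb_{m'}=1$.
\item Set $A_x=\mathcal{E}_0(\rho_{x,00})$ and $B_x=\mathcal{E}_1(\rho_{x,11})$. The surviving diagonal term, $\operatorname{Tr}(A_xA_y)=0$ or $\operatorname{Tr}(B_xB_y)=0$, shows the codewords form a zero-error code of size $M$ for $\mathcal{E}_0$ or for $\mathcal{E}_1$.
\end{itemize}
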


\begin{proof}
Fix an arbitrary $s\in\{0,1\}^n$. Let $\mc{H}_s:=\mc{H}_{s_1}\otimes\cdots\otimes\mc{H}_{s_n}$, and let $\mc{E}_s:=\mc{E}_{s_1}\otimes \cdots\otimes\mc{E}_{s_n}$. Let $V_i$ be the canonical isometry from $\mc{H}_i\rightarrow \mc{H}_0\oplus\mc{H}_1$, so $V_s:=V_{s_1}\otimes\cdots\otimes V_{s_n}$ is a isometry from $\mc{H}_s\rightarrow (\mc{H}_0\oplus\mc{H}_1)^{\otimes n}$. 

Suppose $C_0^{(1)}(\mc{E}_s)=\log(M)$. This means that there exist densities $\rho_1,\ldots,\rho_M$ on $\mc{H}_s$ such that $\Tr\big(\mc{E}_s(\rho_i)\mc{E}_s(\rho_j)  \big)=0$ for all $i\neq j$. For every $1\leq i\leq M$, let $\hat{\rho}_i:= V_s\rho_i V_s^\dagger$ be a density operator on $(\mc{H}_0\oplus\mc{H}_1)^{\otimes n}$. Let $\{U_k\}_{k\in K}$ be a unitary 1-design on $\mc{K}$. From \Cref{def:OTP-dir-sum}, we see that for any density $\sigma$ on $\mc{H}_0$, 
\begin{equation*}
    \mc{N}(V_0\sigma V_0^*)= \frac{1}{\abs{K}}\sum_{k\in K} \mc{E}_0(\sigma)\otimes \ketbra{k}{k},
\end{equation*}
and for any density $\sigma$ on $\mc{H}_1$,
\begin{equation*}
     \mc{N}(V_1\sigma V_1^*)= \frac{1}{\abs{K}}\sum_{k\in K} U_k\mc{E}_1(\sigma)U_k^\dagger\otimes \ketbra{k}{k}.
\end{equation*}
For any $k=k_1\cdots k_n\in K^n$, define the unitary $U_{s,k}:=U_{k_1}^{s_i}\otimes\cdots\otimes U_{k_n}^{s_n}$, where $U_{k_i}^{s_i}=\begin{cases}
    I_{\mc{K}} & \text{ if } s_i=0\\
    U_{k_i} & \text{ if } s_i=1.
\end{cases}$. It follows that
\begin{equation*}
    \mc{N}^{\otimes n} (\hat{\rho}_i) = \frac{1}{\abs{K}^n}\sum_{k\in K^n}U_{s,k}\mc{E}_s(\rho_i)U_{s,k}^\dagger \otimes\ketbra{k}{k}
\end{equation*}
for every $1\leq i\leq M$. Hence for every $i\neq j$,
\begin{align*}
    \Tr\big(  \mc{N}^{\otimes n} (\hat{\rho}_i)  \mc{N}^{\otimes n} (\hat{\rho}_j)     \big)&=\frac{1}{\abs{K}^{2n}} \sum_{k\in K^n} \Tr\big(U_{s,k}\mc{E}_s(\rho_i)U_{s,k}^\dagger U_{s,k}\mc{E}_s(\rho_j)U_{s,k}^\dagger   \big) \\
    & = \frac{1}{\abs{K}^{2n}} \sum_{k\in K^n} \Tr\big(\mc{E}_s(\rho_i)\mc{E}_s(\rho_j)\big)=0.
\end{align*}
We conclude that $C^{(1)}_0(\mc{N}^{\otimes n})\geq \log(M)$. This proves \Cref{eq:n-geq} and the ``$\geq$" direction of \Cref{eq:1-eq}.

Now we only need to prove the ``$\leq$" direction of \Cref{eq:1-eq}. Suppose $C_0^{(1)}(\mc{N})=\log(M)$. Then there exist pairwise orthogonal densities $\sigma^x=\begin{pmatrix}
    \sigma^x_{00} & \sigma^x_{01}\\
    \sigma^x_{10} & \sigma^x_{11}
\end{pmatrix}$, $1\leq x\leq M$ on $\mc{H}_0\oplus \mc{H}_1$. For the notational convenience, we let $A_x:=\mc{E}_0(\sigma^x_{00}), B_x:=\mc{E}_1(\sigma^x_{11})$, and  $a_x:= \Tr(\sigma^x_{00}),b_x:=\Tr(\sigma^x_{11})$ for every $x$. Then for every $x\neq y$,
\begin{align*}
    0&=\Tr\big(\mc{N}(\sigma^x) \mc{N}(\sigma^y)  \big) = \frac{1}{\abs{K}^2}\sum_{k\in K}\Tr\big((A_x+U_kB_xU_k^\dagger)(A_y+U_yB_yU_k^\dagger) \big)\\
    &= \frac{1}{\abs{K}}\Tr(A_xA_y) + \frac{1}{\abs{K}}\Tr(B_xB_y) + \frac{1}{\abs{K}^2}\Tr\big( A_x (\sum_{k\in K} U_kB_yU_k^\dagger) + (\sum_{k\in K} U_kB_xU_k^\dagger)A_y  \big)  \\
    & = \frac{1}{\abs{K}}\Tr(A_xA_y) + \frac{1}{\abs{K}}\Tr(B_xB_y) + \frac{1}{\dim(\mc{K})\abs{K}} a_xb_y+\frac{1}{\dim(\mc{K})\abs{K}}a_yb_x,
\end{align*}
where the last equality uses \Cref{rmk:1-design}. Since all the above four terms are nonnegative, we must have 
\begin{equation}
    a_xb_y=a_yb_x=0\label{eq:==}
\end{equation}
for all $x\neq y$. Note that $a_x+b_x=1$ for every $x$. Assume for some $y$, both $a_y$ and $b_y$ are $>0$. Then for any $x\neq y$, \Cref{eq:==} implies that $a_x=b_x=0$, a contradiction. We conclude that for every $x$, either $a_x=0,b_x=1$ or $a_x=1,b_x=0$. Now fix an $x$, if $a_x=0$ and $b_x=1$, then \Cref{eq:==} implies $a_y=0$ and $b_y=1$ for all $y\neq x$. Similarly, if $a_x=1$ and $b_x=0$, then we must have $a_y=1$ and $b_y=0$ for all $y\neq x$. We conclude that, either $a_x=0,b_x=1$ for all $x$, in which case every $\sigma^x$ is supported on $\mc{H}_1$ and hence $C_0^{(1)}(\mc{E}_1)\geq \log(M)$, or $a_x=1,b_x=0$ for all $x$, in which case every $\sigma^x$ is supported on $\mc{H}_0$ and hence $C_0^{(1)}(\mc{E}_0)\geq \log(M)$. This proves the ``$\leq$" direction of \Cref{eq:1-eq}.
\end{proof}

Finally, we state and prove our main result which establishes an extreme failure of parallel repetition for $\mathsf{QMIP}$ protocols with multiple nontrivial rounds.

\begin{theorem} \label{thm:yipee}
    There exists a QMIP protocol $G$ with $\omega(G) < 1$ but $\omega(G^3) = \omega(G^2) = 1.$
\end{theorem}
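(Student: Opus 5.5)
The plan is to take $G = G_{seq}(\mc{N}, M)$ with $\mc{N} = \mc{E}_1^{\otimes k} \oplus_{\mathrm{OTP}} \mc{E}_2^{\otimes k}$, where $\mc{E}_1,\mc{E}_2$ and $k$ come from \cref{cor:viol-par}. The constant $n$ in that corollary, and hence $k$, will be fixed below. If the two tensor powers have different output spaces, we first embed them into a common one as in \cref{rmk:1-design}. The message set is $M=2$.

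\textbf{Single copy.} By \cref{eq:1-eq} of \cref{lem:OTP-pos-cap},
\[
C_0^{(1)}(\mc{N})=\max\{C_0^{(1)}(\mc{E}_1^{\otimes k}),C_0^{(1)}(\mc{E}_2^{\otimes k})\}=0 .
\]
Hence $C_0^{(1)}(\mc{N}) < \log 2$, and \cref{thm:ch-pr-cor} (local part) gives $\omega(G)<1$.

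\textbf{Two copies.} By \cref{thm:par-rep}, $G^2 = G_{seq}(\mc{N}^{\otimes 2}, 4)$. Applying \cref{eq:n-geq} with $s=(0,1)$ gives
\[
C_0^{(1)}(\mc{N}^{\otimes 2}) \ge C_0^{(1)}(\mc{E}_1^{\otimes k}\otimes \mc{E}_2^{\otimes k}) \ge n .
\]
Choosing $n\ge 2$ yields $C_0^{(1)}(\mc{N}^{\otimes 2})\ge \log 4$, so \cref{thm:ch-pr-cor} gives $\omega(G^2)=1$.

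\textbf{Three copies.} Similarly $G^3 = G_{seq}(\mc{N}^{\otimes 3}, 8)$, and we need $C_0^{(1)}(\mc{N}^{\otimes 3})\ge 3$. Take $s=(0,1,0)$ in \cref{eq:n-geq}. The channel $\mc{E}_1^{\otimes k}\otimes\mc{E}_2^{\otimes k}\otimes\mc{E}_1^{\otimes k}$ contains $\mc{E}_1^{\otimes k}\otimes\mc{E}_2^{\otimes k}$ as a tensor factor. Zero-error one-shot capacity is monotone under tensoring with an extra channel: feed any fixed input into the extra factor and let the decoder discard its output, which preserves orthogonality of the outputs. Therefore
\[
C_0^{(1)}(\mc{N}^{\otimes 3})\ge C_0^{(1)}(\mc{E}_1^{\otimes k}\otimes\mc{E}_2^{\otimes k}) \ge n .
\]
Choosing $n\ge 3$ in \cref{cor:viol-par} handles both the two- and three-copy cases, and then $\omega(G^3)=1$.

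\textbf{Expected obstacle.} The delicate step is the single-copy bound. The one-time pad must prevent the players from superposing across the two branches to win with one copy. This is exactly the ``$\le$'' direction of \cref{eq:1-eq}, which is already established. The remaining points need care but are not deep. First, \cref{thm:ch-pr-cor} requires $C_0^{(1)}\ge\log|M|$ with the message counts matching the tensor-product message sets $M^2=4$ and $M^3=8$. Second, capacity values are logarithms of integers, so a capacity of at least $n$ supplies at least $2^n$ perfectly distinguishable messages, enough to cover $|M|^2$ and $|M|^3$.
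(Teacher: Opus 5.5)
Your proposal is correct and follows the paper's proof essentially step for step. You use the same construction $G=G_{seq}(\mc{E}_1^{\otimes k}\oplus_{\mathrm{OTP}}\mc{E}_2^{\otimes k},2)$, the ``$\le$'' direction of \cref{eq:1-eq} for the single copy, and \cref{eq:n-geq} together with \cref{thm:par-rep} and \cref{thm:ch-pr-cor} for two and three copies; the only cosmetic differences are that for three copies you take $s=(0,1,0)$ and then use monotonicity under tensoring (the paper applies monotonicity directly as $C_0^{(1)}(\mc{N}^{\otimes 3})\ge C_0^{(1)}(\mc{N}^{\otimes 2})$), and that you cite the correct correspondence \cref{thm:ch-pr-cor} where the paper's proof cites \cref{thm:Game-channel-correspondence}.
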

\begin{proof}
    Let $\mc{E}_1, \mc{E}_2$ be as in \cref{lem:superact}. Then by \Cref{cor:viol-par}, there exists $k>1$ such that 
    \begin{equation*}
        C_0^{(1)}(\mc{E}_1^{\otimes k})=C_0^{(1)}(\mc{E}_2^{\otimes k})=0 \text{  and  } C_0^{(1)}(\mc{E}_1^{\otimes k}\otimes \mc{E}_2^{\otimes k})\geq 3. 
    \end{equation*}
    For the notational convenience, let $\mc{F}_i:=\mc{E}_i^{\otimes k}$ for $i=1,2$, and let $\mc{N}:=\mc{F}_1\oplus_{\mathrm{OTP}}\mc{F}_2$. 
    Let $G = G_{seq}(\mc{N}, 2)$. By \Cref{eq:1-eq} in  \Cref{lem:OTP-pos-cap}, 
    \begin{equation*}
        C_0^{(1)}(\mc{N})=\max\{C_0^{(1)}(\mc{F}_1),C_0^{(1)}(\mc{F}_2)\} =0.
    \end{equation*}
    Then by \Cref{thm:Game-channel-correspondence}, we must have  $\omega(G) < 1$. Applying \Cref{eq:n-geq} of \Cref{lem:OTP-pos-cap} with $n=2$ yields
    \begin{equation*}
        C_0^{(1)}(\mc{N}^{\otimes 2})\geq C_0^{(1)}(\mc{F}_1\otimes\mc{F}_2)\geq 3. \label{eq:geq3}
    \end{equation*}
    So in particular, 
    \begin{equation*}
        C_0^{(1)}(\mc{N}^{\otimes 2})\geq 2=\log(4),
    \end{equation*}
    and 
    \begin{equation*}
        C_0^{(1)}(\mc{N}^{\otimes 3})\geq C_0^{(1)}(\mc{N}^{\otimes 2})\geq 3=\log(8),
    \end{equation*}
    where the first inequality uses the fact that $C_0^{(1)}(\Phi\otimes\Psi)\geq C_0^{(1)}(\Phi)$ for all channels $\Phi$ and $\Psi$. \Cref{thm:par-rep} implies that $G^2=G_{seq}(\mc{N}^{\otimes 2}, 4)$ and $G^3=G_{seq}(\mc{N}^{\otimes 3}, 8)$.
    Then it follows again from \Cref{thm:Game-channel-correspondence} that $\omega(G^2)=\omega(G^3)=1$.
\end{proof}

\begin{corollary} \label{cor:all-n-yay}
    $\exists$ QMIP protocol $G$ with $\omega(G) < 1$ but $\omega(G^n) = 1$ for all $n \geq 2.$
\end{corollary}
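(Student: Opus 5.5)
I would take the same protocol $G = G_{seq}(\mc{N},2)$ with $\mc{N} = \mc{F}_1 \oplus_{\mathrm{OTP}} \mc{F}_2$ that was built in the proof of \cref{thm:yipee}. That proof already gives $\omega(G)<1$. It remains to show $\omega(G^n)=1$ for every $n \ge 2$. By \cref{thm:par-rep} we have $G^n = G_{seq}(\mc{N}^{\otimes n}, 2^n)$, so \cref{thm:ch-pr-cor} reduces the claim to showing
\[
C_0^{(1)}(\mc{N}^{\otimes n}) \geq n \quad \text{for all } n\ge 2 .
\]

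Since $\mc{F}_1\otimes\mc{F}_2$ only supplies a fixed amount of zero-error capacity, the first step is to make the base capacity large enough to cover the extra copies. Write $n = 2a + r$ with $a \ge 1$ and $r \in \{0,1\}$. The case $r=0$ needs only $a$ disjoint pairs; the case $r=1$ needs $a$ pairs and one leftover copy. For $r=1$ I would absorb the leftover copy into one of the pairs, treating it as a block of three copies.

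For the block estimates I would apply \cref{eq:n-geq} with the pattern $s=(0,1)$ to get $C_0^{(1)}(\mc{N}^{\otimes 2}) \ge C_0^{(1)}(\mc{F}_1\otimes\mc{F}_2)$. For a block of three copies I would use monotonicity, $C_0^{(1)}(\Phi\otimes\Psi)\ge C_0^{(1)}(\Phi)$, as in \cref{thm:yipee}, to get $C_0^{(1)}(\mc{N}^{\otimes 3}) \ge C_0^{(1)}(\mc{N}^{\otimes 2})$. Superadditivity of the one-shot zero-error capacity then handles the blocks jointly: tensor products of orthogonal input families remain orthogonal at the output, so
\[
C_0^{(1)}(\Phi\otimes\Psi) \ge C_0^{(1)}(\Phi)+C_0^{(1)}(\Psi).
\]
Suppose $C_0^{(1)}(\mc{F}_1\otimes\mc{F}_2)\ge 3$, as \cref{cor:viol-par} allows. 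Then for even $n$ this gives $C_0^{(1)}(\mc{N}^{\otimes n}) \ge 3a \ge 2a = n$. For odd $n = 2a+1$ with $a\ge1$, it gives $C_0^{(1)}(\mc{N}^{\otimes n}) \ge 3(a-1) + 3 = 3a \ge 2a+1 = n$.

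The one point needing care is that \cref{cor:viol-par} hands us a fixed $k$ for a fixed target capacity; I must check that a single $k$ serves every $n$. It does: the argument above only ever uses $C_0^{(1)}(\mc{F}_1\otimes\mc{F}_2)\ge 3$ together with superadditivity, so the $k$ from the proof of \cref{thm:yipee} works for all $n$ at once. This uniformity is the main thing I would verify carefully. One detail to state explicitly in the write-up is that superadditivity holds for orthogonality of the output supports, and orthogonality is preserved under tensor products.
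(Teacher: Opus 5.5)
Your proof is correct: with the single $k$ fixed in \cref{thm:yipee}, splitting $n$ into blocks of two and three copies and using superadditivity of $C_0^{(1)}$ gives $C_0^{(1)}(\mc{N}^{\otimes n})\ge n$, and you are right that this one $k$ serves every $n$. The paper gives no separate proof, but this is the intended route: the $n=2a+3b$ decomposition from the end of the proof of \cref{thm:foul-play}, applied to $\omega(G^2)=\omega(G^3)=1$, where your superadditivity step is the channel-side form of tensoring perfect local strategies for $G^2$ and $G^3$.
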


\subsection{Cheating in the quantum capacity QMIP protocol} 

In this section we show that the interactive protocol capturing the zero-error quantum capacity of a channel exhibits the same cheating phenomenon as its classical counterpart via superactivation. Once again, the setting of superactivation of channel capacities concerns the unassisted zero-error quantum capacity of a quantum channel, $Q_0(\mc{N})$ in the asymptotic setting. Precisely, we show that $H_{seq}$ (see \cref{def:Hseq}) is also a class of interactive protocols, for which $\omega(H_{seq}) < 1$, but $\omega(H_{seq}^n) = 1,$ for all $n \geq 2,$ and in the present context, the state prehsared by Alice and Bob is separable, $\rho_{AB} = \rho_A \otimes \rho_B.$ The proof closely follows the case of the classical capacity protocol, hence we outline a sketch with the parts that are distinctive. 

The following lemma shows that the asymptotic zero-error quantum capacity superactivates.

\begin{lemma}[Theorem 3, \cite{CS12}] \label{lem:qua-super}
    $\exists$ channels $\mc{E}_1, ~\mc{E}_2$ with $Q_0(\mc{E}_1) = Q_0(\mc{E}_2) = 0$ such that $Q_0(\mc{E}_1 \otimes \mc{E}_2) \geq 1.$
\end{lemma}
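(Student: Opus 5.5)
Since this is Theorem~3 of \cite{CS12}, the plan is to reconstruct the Cubitt--Smith argument in the language of operator systems (non-commutative confusability graphs). For a channel $\mc{E}$ with Kraus operators $\{K_i\}$, define $S_{\mc{E}}=\spn\{K_i^\dagger K_j\}$. This is a self-adjoint subspace containing $I$, and every such subspace arises from some channel. By the Knill--Laflamme conditions, a subspace $\mc{C}$ with projection $P$ is a zero-error quantum code for $\mc{E}$ exactly when $PXP\in\C P$ for all $X\in S_{\mc{E}}$. Moreover $S_{\mc{E}_1\otimes\mc{E}_2}=S_{\mc{E}_1}\otimes S_{\mc{E}_2}$. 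So the statement becomes a purely linear-algebraic one. We need operator systems $S_1,S_2$ such that no $S_i^{\otimes n}$ admits a $2$-dimensional Knill--Laflamme code for any $n$. This gives $Q_0^{(1)}(\mc{E}_i^{\otimes n})=0$ for all $n$, hence $Q_0(\mc{E}_i)=0$ by the regularisation formula, as in the remark preceding \cref{cor:viol-par}. We also need $S_1\otimes S_2$ to admit a $2$-dimensional code, which gives $Q_0(\mc{E}_1\otimes\mc{E}_2)\ge Q_0^{(1)}(\mc{E}_1\otimes\mc{E}_2)\ge 1$.

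The construction proceeds in two steps.

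\textbf{Step 1: build the joint code in by hand.} Fix a two-dimensional subspace $\mc{C}=\spn\{\ket{\Phi_0},\ket{\Phi_1}\}\subseteq\mc{H}_1\otimes\mc{H}_2$ of suitably entangled states, for instance two orthogonal maximally entangled states. Let $P$ be its projection. Then impose the Knill--Laflamme constraints on $S_1\otimes S_2$: every product $X\otimes Y$ with $X\in S_1$, $Y\in S_2$ must satisfy $P(X\otimes Y)P\in\C P$. Writing $\ket{\Phi_a}=(I\otimes V_a)\ket{\phi^+}$, these become trace conditions $\Tr(X^T V_a^\dagger Y V_b)\propto\delta_{ab}$. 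I would arrange them by choosing $S_2$ inside a complement of $S_1$ determined by the $V_a$, in the spirit of $S_2\subseteq (V S_1^{T}V^\dagger)^{\perp}+\C I$. This keeps the joint code automatic while still leaving both $S_i$ large.

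\textbf{Step 2: show that no tensor power of either $S_i$ has a code.} Here I would choose $S_1$ as a generic subspace of the appropriate dimension (subject to the linear constraints from Step 1), and prove the following. For each $n$, the set of $S$ for which $S^{\otimes n}$ admits a rank-$2$ projection $P$ with $PS^{\otimes n}P=\C P$ is contained in a proper algebraic subvariety. The route is a dimension count. If $S$ has dimension $s$ in $d\times d$ matrices, the code conditions impose roughly $3(s^n-1)$ real equations. The Grassmannian of $2$-planes in $(\C^d)^{\otimes n}$ has dimension $4(d^n-2)$. Choosing $s$ so that $s^n$ outgrows $d^n$ (roughly $s>d$, which is possible since $s$ can be up to $d^2$) makes the expected solution set empty for all $n$. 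A countable union of proper subvarieties is avoided by a generic choice, and this handles all $n$ simultaneously.

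I expect Step~2 to be the main obstacle, because it must hold uniformly in $n$ while $S_1$ still satisfies the special linear constraints of Step~1. Genericity is relative to a constrained family, so the dimension count must show that the constrained family is not swallowed by the bad variety. If the naive count fails, I would try the following. Exhibit, for each $n$, one explicit member of the constrained family whose $n$-th tensor power has no code, for example a large operator system such as all matrices orthogonal to a single fixed operator. Then argue by irreducibility and upper semicontinuity that the constraint fails generically. This is where the bulk of the technical work in \cite{CS12} lies.
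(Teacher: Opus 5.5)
The paper gives no proof of this lemma; it imports it as Theorem~3 of \cite{CS12}. Your reduction is sound: Kraus-product operator systems, Knill--Laflamme, $S_{\mc{E}_1\otimes\mc{E}_2}=S_{\mc{E}_1}\otimes S_{\mc{E}_2}$, and regularisation. Planting a joint code with maximally entangled states in Step~1 is also a reasonable design. The gap is Step~2. That is where all of the difficulty lies, and it is not proved.

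Counting $3(s^n-1)$ equations against $4(d^n-2)$ unknowns is an expected-dimension heuristic. It is valid for subspaces in general position, but $S^{\otimes n}$ is not in general position. As $S$ varies, $S^{\otimes n}$ sweeps out a family whose dimension is independent of $n$, inside a Grassmannian whose dimension grows exponentially. The count gives no reason why this family avoids the code-admitting locus.

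Your own Step~1 shows the heuristic fails for tensor products. Already at $n=1$ you need $3(s_i-1)>4(d-2)$ for both factors. For $d\ge 7$ this forces $3(s_1s_2-1)>4(d^2-2)$. The identical count therefore predicts that $S_1\otimes S_2$ has no two-dimensional code, yet you construct one by hand. The statement to be proved is exactly that such codes, entangled across the copies of a \emph{single} $S_i$, are ruled out for every $n$ simultaneously. Nothing in the proposal does this.

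The fallback does not repair this, for two reasons.
\begin{itemize}
\item The family in Step~2 has fixed dimension and must satisfy the Step~1 constraints. A two-dimensional quantum code imposes three Knill--Laflamme conditions per pair, so $S_2$ must be orthogonal to transformed copies of $S_1$. An operator system as large as $\{Z\}^\perp$ would then force $\dim S_2\le 1$, which is a noiseless channel with codes.
\item A larger witness cannot be transferred down. If $S\subseteq S'$, every code for $(S')^{\otimes n}$ is a code for $S^{\otimes n}$. So absence of codes propagates only upward.
\end{itemize}

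What you need, for each $n$, is one of the following:
\begin{itemize}
\item a witness \emph{inside} the constrained family whose $n$-th tensor power has no code, found separately for $S_1$ and for $S_2$ (which is a function of $S_1$); or
\item a genuine fibre-dimension bound on the incidence variety $\{(S,\mc{C}) : \mc{C}\text{ is a code for } S^{\otimes n}\}$ that handles entangled $\mc{C}$.
\end{itemize}
You must also complexify, treating bras as independent variables, so that the bad set is Zariski-closed. The conditions involving orthogonal projections are only real-algebraic, and a proper closed semialgebraic set can have positive measure. Semicontinuity plus one witness is therefore not enough. Supplying all this is the technical core of \cite{CS12}, which the paper relies on without reproducing.
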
 

Since $Q_0(\mc{E}) = \sup\limits_k \frac{Q_0^{(1)}(\mc{E}^{\otimes k})}{k}$, it follows that $Q_0(\mc{E}) = 0$ if and only if $Q_0^{(1)}(\mc{E}^{\otimes k}) = 0, ~\forall k.$ Therefore, we have the following corollary.

\begin{corollary} \label{cor:quan-viol}
    Let $\mc{E}_1, \mc{E}_2$ be as in \cref{lem:qua-super}. For every positive integer $n$, there exists $k>1$ such that $Q_0^{(1)}(\mc{E}_1^{\otimes k}) = Q_0^{(1)}(\mc{E}_2^{\otimes k})=0$ and $Q_0^{(1)}(\mc{E}^{\otimes k}_1 \otimes \mc{E}^{\otimes k}_2) \geq n$.  
\end{corollary}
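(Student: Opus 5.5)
The statement to prove is \cref{cor:quan-viol}. I would follow the classical case (\cref{cor:viol-par}) verbatim, replacing $C_0$ by $Q_0$. The only input is the regularisation identity $Q_0(\mc{E}) = \sup_k \frac{1}{k} Q_0^{(1)}(\mc{E}^{\otimes k})$, together with \cref{lem:qua-super}.

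\textbf{Step 1: the single channels.} Since $Q_0(\mc{E}_1)=Q_0(\mc{E}_2)=0$ and each term of the supremum is nonnegative, every term must vanish. Hence $Q_0^{(1)}(\mc{E}_i^{\otimes k}) = 0$ for every $k\ge 1$ and $i=1,2$. This half of the claim therefore holds for all $k$ simultaneously, and $k$ only needs to be chosen for the joint channel.

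\textbf{Step 2: the joint channel.} Set $\mc{F}=\mc{E}_1\otimes\mc{E}_2$. Then $\mc{F}^{\otimes k}\cong \mc{E}_1^{\otimes k}\otimes\mc{E}_2^{\otimes k}$ up to reordering tensor factors, and this unitary relabelling does not change one-shot zero-error capacities. By \cref{lem:qua-super},
\[
\sup_k \tfrac{1}{k} Q_0^{(1)}(\mc{F}^{\otimes k}) = Q_0(\mc{F}) \ge 1 .
\]
Fix any $\delta\in(0,1)$, say $\delta=1/2$. There is some $k_0$ with $Q_0^{(1)}(\mc{F}^{\otimes k_0}) \ge k_0(1-\delta) > 0$.

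\textbf{Step 3: amplify to $n$.} Use superadditivity: $Q_0^{(1)}(\Phi^{\otimes j}) \ge j\,Q_0^{(1)}(\Phi)$. This holds because tensoring zero-error codes, meaning encodings, subspaces and recovery maps, gives a zero-error code for the product channel of dimension equal to the product of the dimensions. Taking $k = j k_0$ then gives
\[
Q_0^{(1)}(\mc{F}^{\otimes k}) \ge j k_0 (1-\delta) \ge j/2 .
\]
Choose $j \ge 2n$ and also large enough that $k>1$. Combined with Step 1, this gives the stated conclusion.

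\textbf{Main obstacle.} The delicate point is the regularisation identity and the superadditivity behind it. As a limit, the identity follows from Fekete's lemma applied to the superadditive sequence $Q_0^{(1)}(\mc{E}^{\otimes k})$. Superadditivity needs care because the one-shot zero-error quantum capacity is $\log$ of an integer dimension. The product of two perfect codes is again perfect: for product inputs $\rho_1\otimes\rho_2$ the tensor recovery map returns the input exactly, so by linearity it returns every state supported on $\mc{C}_1\otimes\mc{C}_2$ exactly. If $Q_0^{(1)}(\mc{F}^{\otimes k_0})$ is merely positive, it is at least $1$ (that is, $\log 2$), and the bound $j\cdot 1\ge n$ already suffices without invoking the limit, which is the cleanest route.
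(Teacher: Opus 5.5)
Your proposal is correct and takes essentially the same route as the paper, which obtains the corollary from the regularisation identity $Q_0(\mc{E}) = \sup_k \frac{1}{k} Q_0^{(1)}(\mc{E}^{\otimes k})$ together with \cref{lem:qua-super}. Your Step 3 makes explicit a step the paper leaves implicit: you tensor zero-error codes to get $Q_0^{(1)}(\Phi^{\otimes j}) \ge j\, Q_0^{(1)}(\Phi)$, then amplify a single $k_0$ with positive one-shot capacity to $k = j k_0$, which carries $Q_0(\mc{E}_1 \otimes \mc{E}_2) \ge 1$ to one-shot capacity at least $n$.
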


Next, we obtain the following result, the proof of which follows the same argument as that for the classical capacity case, starting with the zero-error quantum capacity analogue of \cref{thm:par-rep}, and onwards in \cref{sec:cheat}.

This result shows a different class of $\mathsf{QMIP}$ protocols corresponding to the zero-error quantum capacity of a channel, whose local value shows extreme failure of parallel repetition.

\begin{theorem} \label{thm:phenom-quant}
    There exists a class of QMIP protocols $H_{seq}$ with $\omega(H_{seq}) < 1$ but $\omega(H_{seq}^n) = 1$ for all $n \geq 2.$
\end{theorem}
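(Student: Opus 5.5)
The plan is to mirror the proof of \cref{thm:yipee} and \cref{cor:all-n-yay}, replacing zero-error classical capacity by $Q_0^{(1)}$ throughout. First I would prove the quantum analogue of \cref{thm:par-rep}: $H_{seq}(\mc{N}_1,M_1)\times H_{seq}(\mc{N}_2,M_2)=H_{seq}(\mc{N}_1\otimes\mc{N}_2,M_1M_2)$. This step needs some care, because the repeated protocol samples a \emph{product} $\psi_1\otimes\psi_2$ and applies two swap tests, whereas $H_{seq}(\mc{N}_1\otimes\mc{N}_2,M_1M_2)$ samples a Haar-random $\psi\in\C^{M_1M_2}$ with one swap test. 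I would therefore not claim literal equality. Instead I would show that both protocols have value $1$ under exactly the same strategies, namely those where $\mc{D}\circ(\mc{N}\circ\mc{E})$ acts as the identity on the code space $\C^{M_1}\otimes\C^{M_2}$. For the direction from repetition to the joint channel, perfect acceptance on all product pure states forces identity on product states. Since product states span the operator space $B(\C^{M_1}\otimes\C^{M_2})$, linearity then gives the identity channel. This is all that \cref{thm:Game-channel-correspondence} requires.

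Next I would prove the quantum analogue of \cref{lem:OTP-pos-cap}: for $\mc{N}=\mc{F}_1\oplus_{\mathrm{OTP}}\mc{F}_2$ we have $Q_0^{(1)}(\mc{N}^{\otimes n})\ge Q_0^{(1)}(\mc{F}_{s_1}\otimes\cdots\otimes\mc{F}_{s_n})$ and $Q_0^{(1)}(\mc{N})=\max_i Q_0^{(1)}(\mc{F}_i)$.

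For the lower bound, embed the encoder via $V_s$. The key register $\ket{k}$ is output by the channel, so the decoder reads $k$, undoes $U_{s,k}$, and applies the original recovery map.

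For the upper bound, I would reduce to the classical statement. A zero-error quantum code of dimension $M\ge 2$ yields at least two orthogonal, perfectly distinguishable input states. The proof of \cref{eq:1-eq} then forces every codeword $\sigma$ to lie entirely in $\mc{H}_0$ or entirely in $\mc{H}_1$, with all codewords in the same block. Any superposition within the code must also be in a single block, so the whole code subspace lies in one block and gives a code for $\mc{F}_i$. Here the 1-design mixing of cross terms is handled exactly as in \cref{eq:==}.

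This upper bound is the step I expect to be the main obstacle. Orthogonality of outputs is necessary for the quantum code, but I must check carefully that the code subspace cannot straddle the two blocks. I would argue this by taking the states $\ket{0_L}$, $\ket{1_L}$ and $\ket{\pm_L}$ and applying \cref{eq:==} pairwise.

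Finally, using \cref{cor:quan-viol}, I would pick $k$ with $Q_0^{(1)}(\mc{F}_i)=0$ and $Q_0^{(1)}(\mc{F}_1\otimes\mc{F}_2)\ge 3$, and set $H=H_{seq}(\mc{N},2)$. The one-copy equality gives $Q_0^{(1)}(\mc{N})=0<1$, so $\omega(H)<1$ by \cref{thm:Game-channel-correspondence}. For $n=2$ and $n=3$, monotonicity $Q_0^{(1)}(\Phi\otimes\Psi)\ge Q_0^{(1)}(\Phi)$ (append a fixed state on the second input) gives capacities $\ge 2$ and $\ge 3$, so $\omega(H^2)=\omega(H^3)=1$. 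For general $n\ge 2$, I would instead take $s$ with $s_1=1$, $s_2=2$ and the other entries arbitrary. Then $Q_0^{(1)}(\mc{N}^{\otimes n})\ge Q_0^{(1)}(\mc{F}_1\otimes\mc{F}_2)$ only gives capacity $3$, while we need $n$. So I would choose $k$ via \cref{cor:quan-viol} large enough that each pair $(\mc{F}_1,\mc{F}_2)$ carries at least $2$ qubits. I would then use $s=(1,2,1,2,\dots)$, and when $n$ is odd absorb the leftover copy using the $n=3$ bound. Superadditivity under tensor products of codes then gives $Q_0^{(1)}(\mc{N}^{\otimes n})\ge n$, hence $\omega(H^n)=1$ for all $n\ge 2$.
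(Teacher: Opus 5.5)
Your proposal is correct and follows the paper's own route, which transports the argument of Section~\ref{sec:cheat} from $C_0^{(1)}$ to $Q_0^{(1)}$: the parallel-repetition identity, the one-time-pad direct sum lemma, and the $n=2,3$ bounds extended to all $n\ge 2$. Your remark that the repeated protocol and $H_{seq}(\mc{N}_1\otimes\mc{N}_2,M_1M_2)$ agree only on perfect strategies, not as protocols with equal values, is a correct refinement that the paper glosses over. One small fix in the block-localisation step: \cref{eq:==} only links orthogonal pairs, so pairwise application cannot tie $\ket{\pm_L}$ to $\ket{0_L},\ket{1_L}$; instead, place an orthonormal basis of $\mc{C}$ in a single block by the classical argument, and then use positivity of the encoder, $\mc{E}(\phi)\le\mc{E}(P_{\mc{C}})$, to conclude for every state on $\mc{C}$.
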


\section{Impossibility of superactivation}

In this section, we show instances where superactivated cheating is impossible; one with particular cryptographic relevance (\cref{thm:mult-imposs}), and others with consequences for quantum channel capacities (\cref{thm:C-imp-sup} and \cref{cor:Q-imp-sup}).

\subsection{Impossibility of superactivation in channel capacities}

First, we show that as a consequence of our impossibility result — the entangled value of a quantum game cannot be superactivated (\cref{thm:imposs-q-game}), certain entanglement-assisted zero-error capacities cannot be superactivated. This follows from the more general idea that for any channel capacity, impossibility of $1$-shot superactivation implies that for the asymptotic case.

\begin{lemma}[$1$-shot impossibility of superactivation extends to asymptotic case] \label{thm:1-shot-superactivation-to-asymptotic}
    For any channel capacity $T(\mc{N})$, if $T^{(1)}(\mc{N})$ cannot be superactivated, then $T(\mc{N})$ cannot be superactivated, where $T(\mc{N}) = \lim\limits_{n \to \infty} \frac{T^{(1)}(\mc{N}^{\otimes n})}{n}.$ 
\end{lemma}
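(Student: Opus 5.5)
I will argue by contraposition, working directly from the regularised formula $T(\mc{N}) = \lim_{n\to\infty} T^{(1)}(\mc{N}^{\otimes n})/n$. "$T^{(1)}$ cannot be superactivated" will be read as: for all channels $\Phi_1,\Phi_2$, if $T^{(1)}(\Phi_1)=T^{(1)}(\Phi_2)=0$ then $T^{(1)}(\Phi_1\otimes\Phi_2)=0$. Let $\mc{N}_1,\mc{N}_2$ satisfy $T(\mc{N}_1)=T(\mc{N}_2)=0$. The goal is $T(\mc{N}_1\otimes\mc{N}_2)=0$.

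The first step converts asymptotic vanishing into one-shot vanishing for every tensor power. As in the paragraph before \cref{cor:viol-par}, I will use that for the capacities under consideration $T^{(1)}$ is superadditive on tensor powers, $T^{(1)}(\mc{N}^{\otimes (a+b)})\ge T^{(1)}(\mc{N}^{\otimes a})+T^{(1)}(\mc{N}^{\otimes b})$, since codes on disjoint blocks can be concatenated. By Fekete's lemma the limit then equals $\sup_k T^{(1)}(\mc{N}^{\otimes k})/k$. Since $T^{(1)}\ge 0$, it follows that $T(\mc{N}_i)=0$ if and only if $T^{(1)}(\mc{N}_i^{\otimes k})=0$ for all $k$.

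The second step applies the hypothesis with $\Phi_1=\mc{N}_1^{\otimes k}$ and $\Phi_2=\mc{N}_2^{\otimes k}$ for each fixed $k$. This gives $T^{(1)}(\mc{N}_1^{\otimes k}\otimes\mc{N}_2^{\otimes k})=0$. Up to a reordering of tensor factors, which is a unitary relabelling of input and output systems and leaves any capacity unchanged, this channel is $(\mc{N}_1\otimes\mc{N}_2)^{\otimes k}$. Hence $T^{(1)}((\mc{N}_1\otimes\mc{N}_2)^{\otimes k})=0$ for every $k$, and dividing by $k$ and taking the limit gives $T(\mc{N}_1\otimes\mc{N}_2)=0$.

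The main obstacle is justifying the first step for a general capacity $T$, since the statement does not spell out what properties $T$ has. If superadditivity fails, I would avoid Fekete's lemma and instead get monotonicity in $k$ directly: a zero-error (possibly entanglement-assisted) code for $\mc{N}^{\otimes a}$ is also a code for $\mc{N}^{\otimes b}$ with $b\ge a$, because the extra channel uses can be ignored. This gives $T^{(1)}(\mc{N}^{\otimes b})\ge T^{(1)}(\mc{N}^{\otimes a})$. With $T^{(1)}$ taking values in $\{0\}\cup[1,\infty)$, as $\log$ of an integer $\ge 2$ or zero, a single $k$ with $T^{(1)}(\mc{N}^{\otimes k})>0$ yields positivity for all larger powers. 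Then $T(\mc{N})\ge \limsup_n \lfloor n/k\rfloor/n=1/k>0$, where the bound $T^{(1)}(\mc{N}^{\otimes n})\ge\lfloor n/k\rfloor$ uses concatenation of $\lfloor n/k\rfloor$ independent blocks. I will state these mild properties (monotonicity under adding channel uses, block concatenation, and invariance under reordering) explicitly as assumptions on $T$. They hold for all the zero-error capacities defined in the paper.
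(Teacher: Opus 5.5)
Your argument is correct, but it is organised differently from the paper's.

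\textbf{The paper's route.} The paper fixes one channel $\mathcal{N}$ with $T^{(1)}(\mathcal{N})=0$. It applies the hypothesis inductively to the pairs $(\mathcal{N},\mathcal{N}^{\otimes n})$ to get $T^{(1)}(\mathcal{N}^{\otimes n})=0$ for all $n$, hence $T(\mathcal{N})=0$. So it proves that, under the hypothesis, one-shot vanishing already forces asymptotic vanishing. The step to two channels with $T(\mathcal{N}_1)=T(\mathcal{N}_2)=0$ is left implicit there. That step tacitly needs the converse $T(\mathcal{N}_i)=0\Rightarrow T^{(1)}(\mathcal{N}_i)=0$, which rests on the same block-concatenation property you invoke.

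\textbf{Your route.} You start from $T(\mathcal{N}_i)=0$ and convert it to $T^{(1)}(\mathcal{N}_i^{\otimes k})=0$ for every $k$; this is the characterisation the paper itself uses before its corollaries on $C_0$ and $Q_0$. You then apply the hypothesis once per $k$ to the pair $(\mathcal{N}_1^{\otimes k},\mathcal{N}_2^{\otimes k})$, together with a reordering of tensor factors. This needs no induction, proves the two-channel statement exactly as stated, and makes the required properties of $T$ explicit.

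\textbf{A small point on your fallback.} It does not really avoid superadditivity, since the bound $T^{(1)}(\mathcal{N}^{\otimes n})\ge\lfloor n/k\rfloor$ is itself block concatenation; what it avoids is Fekete's lemma. The definition of $T$ already presupposes that the limit exists, so concatenation alone suffices. Indeed, $T^{(1)}(\mathcal{N}^{\otimes mk})\ge m\,T^{(1)}(\mathcal{N}^{\otimes k})$ gives $T(\mathcal{N})\ge T^{(1)}(\mathcal{N}^{\otimes k})/k$ along the subsequence $n=mk$, with no need for monotonicity or integrality.
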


\begin{proof}
    Since $T^{(1)}(\mc{N})$ cannot be superactivated, it implies that if $T^{(1)}(\mc{N}) = 0$, then $T^{(1)}(\mc{N}^{\otimes 2}) = 0.$ More generally, for channels $\mc{M}, \mc{N}$, if $T^{(1)}(\mc{M}) = T^{(1)}(\mc{N}) = 0$, then~$T^{(1)}(\mc{M} \otimes \mc{N}) = 0.$ We proceed with the case where $\mc{M} = \mc{N}$ to establish the claim for the asymptotic capacity which is a regularisation of the one-shot capacity, i.e., $T(\mc{N}) = \lim\limits_{n \to \infty} \frac{T^{(1)}(\mc{N}^{\otimes n})}{n}.$ Take $\mc{N}^{\otimes 2} = \mc{N}'.$ From hypothesis, $T^{(1)}(\mc{N}) = 0 \implies T^{(1)}(\mc{N}') = 0.$ Therefore, $T^{(1)}(\mc{N} \otimes \mc{N}') = 0 = T^{(1)}(\mc{N}^{\otimes 3}).$ Since we have shown that $T^{(1)}(\mc{N}) = 0 \implies T^{(1)}(\mc{N}^{\otimes n}) = 0$ for $n = 2, 3$ this implies that this fact generally holds for all $n.$ Therefore in the regularised formula, $T^{(1)}(\mc{N}^{\otimes n}) = 0$ for all $n$. This implies that $T(\mc{N}) = \lim\limits_{n \to \infty} \frac{T^{(1)}(\mc{N}^{\otimes n})}{n} = 0$ if $T^{(1)}(\mc{N}) = 0$ completing the proof.
\end{proof}

Using this, we show that the entanglement-assisted zero-error classical capacity (both one-shot and asymptotic) cannot be superactivated.

\begin{theorem}[Impossibility of superactivation for the classical case] \label{thm:C-imp-sup}
    For any channels $\mc{N}_1$ and $\mc{N}_2$, if $C_{0, E}^{(1)}(\mc{N}_1) = C_{0, E}^{(1)}(\mc{N}_2) = 0$, then $C_{0, E}^{(1)}(\mc{N}_1 \otimes \mc{N}_2) = 0$. Consequently, $C_{0, E}(\mc{N})$ cannot be superactivated.
\end{theorem}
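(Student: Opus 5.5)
The plan is to pass through the quantum game–channel correspondence and then invoke the monotonicity of the entangled value, \cref{thm:imposs-q-game}. Since $C^{(1)}_{0,E}$ is $\log$ of an integer number of messages, $C^{(1)}_{0,E}(\mc{N})=0$ means exactly that $\mc{N}$ cannot transmit $M=2$ messages with zero error even with entanglement. Now consider the one-round game $G_{par}(\mc{N},2)$ of \cref{def:game-clas}.

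First, I would show that $C^{(1)}_{0,E}(\mc{N}_i)=0$ forces $\omega^*(G_{par}(\mc{N}_i,2))<1$. By \cref{thm:game-prot-val}, this is equivalent to $\omega^*(G_{seq}(\mc{N}_i,2))<1$. The subtlety is the gap between ``no perfect strategy'' and ``value $<1$'', since the entangled value is a supremum. I would handle it by a compactness argument on Bob's side only. Bob only needs to distinguish $\sigma_0,\sigma_1$ on $bB$. So the relevant quantity is the maximal distinguishability $\frac12+\frac14\|(\mc{N}\circ\mc{E}_0\otimes\id)(\rho)-(\mc{N}\circ\mc{E}_1\otimes\id)(\rho)\|_1$.

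This equals $\frac12+\frac14\|(\mc{N}\otimes\id)(X)\|_1$ for $X=(\mc{E}_0-\mc{E}_1)\otimes\id(\rho)$. I would aim to rewrite it through the diamond-type quantity $\sup\|(\mc{N}\circ(\mc{E}_0-\mc{E}_1))\otimes\id\|$. The supremum of such a norm over ancilla dimensions is attained at ancilla dimension $d_a$, which makes the optimisation compact.

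This is the step I expect to be the main obstacle. If the reduction to a bounded ancilla is unclear, the fallback is to argue directly with a perfect-strategy formulation. That is, interpret the hypothesis as a statement about perfect strategies (value exactly $1$) and prove the contrapositive in the form ``a perfect strategy for the product yields a perfect strategy for each factor''. The proof of \cref{thm:imposs-q-game} gives precisely this: from a strategy for $G_1\times G_2$ of value $1$ one builds a strategy for $G_1$ of value $\ge 1$, with the same finite entangled state plus $\rho_{A_2B_2}$.

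Second, I would use \cref{thm:par-rep}, which gives $G_{seq}(\mc{N}_1,2)\times G_{seq}(\mc{N}_2,2)=G_{seq}(\mc{N}_1\otimes\mc{N}_2,4)$. Suppose $C^{(1)}_{0,E}(\mc{N}_1\otimes\mc{N}_2)>0$. Then $\mc{N}_1\otimes\mc{N}_2$ perfectly sends two messages with some finite entangled state. I would embed this into a perfect strategy for $G_{seq}(\mc{N}_1\otimes\mc{N}_2,4)$? That does not directly follow, since only two messages are available.

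Instead I would apply the reduction argument of \cref{thm:imposs-q-game} directly at the protocol level (it is a 3-message protocol). Take the zero-error two-message code for $\mc{N}_1\otimes\mc{N}_2$ with encoders $\mc{E}_m$ on $A_1A_2$, shared $\rho$, and decoder $D_m$. Bob's decoding measurement then distinguishes the two joint outputs. The goal is to show that some marginal already separates, i.e. that one of the $\mc{N}_i$ alone sends a bit with entanglement assistance. For this I would let Alice simulate $\mc{N}_2$ herself on the $A_2$ part, which is the monotonicity trick with the second channel's output supplied to Bob. The difficulty is that Bob needs $\mc{N}_2$'s output, which Alice holds, so this simulation does not obviously work.

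So the cleaner route is to go through the teleportation game. In $G_{par}$ for the product channel, the Choi state of $\mc{N}_1\otimes\mc{N}_2$ factors as $\nu^{(1)}\otimes\nu^{(2)}$, and the Bell test also factors. I would therefore adapt the game so that its acceptance predicate is a product, and apply \cref{thm:imposs-q-game} to conclude $\omega^*$ for the product is at most $\omega^*$ for each factor, hence $<1$. Translating back via \cref{thm:game-prot-val} then gives $C^{(1)}_{0,E}(\mc{N}_1\otimes\mc{N}_2)=0$.

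The asymptotic statement then follows from \cref{thm:1-shot-superactivation-to-asymptotic}.
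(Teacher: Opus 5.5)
\paragraph{Verdict.} Your route is the paper's route: pass to the teleportation game $G_{par}$, invoke monotonicity of $\omega^*$ for one-round games (\cref{thm:imposs-q-game}), translate back with \cref{thm:game-prot-val}, and finish with \cref{thm:1-shot-superactivation-to-asymptotic}. However, the decisive step does not go through.

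\paragraph{The first step.} Your perfect-strategy fallback is the right way around the sup-versus-max issue. The compactness claim is not justified as stated: the input of $\mc{N}\circ(\mc{E}_0-\mc{E}_1)$ is Alice's entangled share, whose dimension is unbounded, so nothing caps the ancilla at $d_a$.

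\paragraph{The genuine gap.} The gap is the step ``adapt the game so that its acceptance predicate is a product.'' The statement $C^{(1)}_{0,E}(\mc{N}_1\otimes\mc{N}_2)>0$ amounts to $G_{par}(\mc{N}_1\otimes\mc{N}_2,2)$ having a perfect strategy. That game's winning condition is global: $m=m'$ for a single bit spread over both channels, checked when the joint Bell outcome is $\phi^+\otimes\phi^+$. You give no factorisation, and the natural one fails.

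The product $G_{par}(\mc{N}_1,2)\times G_{par}(\mc{N}_2,2)$ has the same question state as $G_{par}(\mc{N}_1\otimes\mc{N}_2,4)$. But its accepting projection satisfies $\Pi^{(1)}\otimes\Pi^{(2)}\preceq\Pi^{\mathrm{joint}}$, because it also rejects when only one Bell outcome is $\phi^+$ and that copy's bit is wrong. So monotonicity bounds the harder game, and the inequality you need runs the wrong way.

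The difference is real. Take $\mc{N}_1$ the noiseless qubit channel and $\mc{N}_2$ the completely depolarising qubit channel. Superdense coding makes $G_{par}(\mc{N}_1\otimes\mc{N}_2,4)$ perfect, while the product game has $\omega^*\le\omega^*(G_{par}(\mc{N}_2,2))=7/8$. So there is no analogue of \cref{thm:par-rep} for $G_{par}$, and monotonicity of $\omega^*$ says nothing about sending even one bit through the joint channel. The paper's proof is exactly this combination and does not supply this step either.

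\paragraph{What closes the gap.} A direct argument with the non-commutative graph works. Let $S_{\mc{N}}=\mathrm{span}\{K_i^\dagger K_j\}$ for Kraus operators $K_i$ of $\mc{N}$.
\begin{itemize}
\item Take a shared pure state with reduced state $\rho$ on $A_0$, and encoders with Kraus operators $E^{(0)}_k$ and $E^{(1)}_l$. Then $\operatorname{Tr}(\sigma_0\sigma_1)=\sum_{i,j,k,l}\bigl|\operatorname{Tr}\bigl(K_i^\dagger K_j E^{(1)}_l\rho E^{(0)\dagger}_k\bigr)\bigr|^2$.
\item Hence the outputs are orthogonal iff every $E^{(1)}_l\rho E^{(0)\dagger}_k$ lies in the Hilbert--Schmidt complement $S_{\mc{N}}^\perp$ (using that $S_{\mc{N}}$ is self-adjoint).
\item If $S_{\mc{N}}=B(\mc{H}_{\mathrm{in}})$, this forces $\rho E^{(0)\dagger}_k=\sum_l E^{(1)\dagger}_l E^{(1)}_l\rho E^{(0)\dagger}_k=0$ for every $k$, hence $\rho=0$, a contradiction.
\item Conversely, a nonzero $X=\sum_i s_i\ket{w_i}\bra{u_i}\in S_{\mc{N}}^\perp$ yields a code. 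Take $\rho\propto\mathrm{diag}(s_i)$ and the isometric encoders $\sum_i\ket{u_i}\bra{i}$ and $\sum_i\ket{w_i}\bra{i}$.
\end{itemize}
Thus $C^{(1)}_{0,E}(\mc{N})=0$ iff $S_{\mc{N}}=B(\mc{H}_{\mathrm{in}})$. Since $S_{\mc{N}_1\otimes\mc{N}_2}=S_{\mc{N}_1}\otimes S_{\mc{N}_2}$, the one-shot claim follows at once. The asymptotic claim then goes through as you say, using superadditivity to get $C^{(1)}_{0,E}(\mc{N}_i^{\otimes n})=0$ for all $n$.
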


\begin{proof}
    Let $G_{par}^1$ and $G_{par}^2$ be quantum games for the zero-error classical capacity as defined in \cref{def:game-clas} corresponding to channels $\mc{N}_1$ and $\mc{N}_2$. By \cref{thm:imposs-q-game}, we have that $\omega^{\ast}(G_{par}^1 \times G_{par}^2) \leq \omega^{\ast}(G_{par}^i)$ where $i = 1, 2.$ Combining this with \cref{thm:game-prot-val} the claim holds. The last claim follows from the first via \cref{thm:1-shot-superactivation-to-asymptotic}.
\end{proof}

Now we note that, the quantum zero-error entanglement-assisted capacity (asymptotic) is related to the analogous classical capacity by the following formula

\begin{equation}\label{eq:rel-C-Q}
    \frac{1}{2} C_{0,E}(\mc{N}) = Q_{0,E}(\mc{N}).
\end{equation}

The above relation from the duality of teleportation and superdense coding, and analogous arguments for the finite-error capacities as given by the reverse Shannon theorem \cite{BSST02}. Via this relation, we will now show that $Q_{0,E}(\mc{N})$ cannot be superactivated in stark contrast to its unassisted counterpart, and to the best of our knowledge this was not known earlier.

\begin{corollary}[Impossibility of superactivation for the asymptotic quantum case] \label{cor:Q-imp-sup}
    For any channels $\mc{N}_1$ and  $\mc{N}_2$, if $Q_{0, E}(\mc{N}_1) = Q_{0, E}(\mc{N}_2) = 0$, then $Q_{0, E}(\mc{N}_1 \otimes \mc{N}_2) = 0$.
\end{corollary}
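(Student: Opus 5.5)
The plan is to transfer \cref{thm:C-imp-sup} to the quantum setting through the relation \eqref{eq:rel-C-Q}. Assume $Q_{0,E}(\mc{N}_1) = Q_{0,E}(\mc{N}_2) = 0$. Applying \eqref{eq:rel-C-Q} to each channel gives $C_{0,E}(\mc{N}_1) = 2Q_{0,E}(\mc{N}_1) = 0$, and likewise $C_{0,E}(\mc{N}_2)=0$.

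The next step is to descend from the asymptotic classical capacity to the one-shot one. By definition $C_{0,E}(\mc{N}_i) = \lim_n C^{(1)}_{0,E}(\mc{N}_i^{\otimes n})/n$. One-shot zero-error capacities take values in $\{\log M : M \in \mathbb{N}\}$ and are superadditive under tensor powers, since codes can be concatenated and the shared entanglement tensored. Hence the limit is a supremum, and vanishing of the limit forces $C^{(1)}_{0,E}(\mc{N}_i^{\otimes n}) = 0$ for every $n$. In particular $C^{(1)}_{0,E}(\mc{N}_i)=0$.

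Then I would invoke the one-shot statement of \cref{thm:C-imp-sup}, applied to the pairs $\mc{N}_1^{\otimes n}$ and $\mc{N}_2^{\otimes n}$. This gives $C^{(1)}_{0,E}(\mc{N}_1^{\otimes n}\otimes\mc{N}_2^{\otimes n}) = 0$ for every $n$. Regularising yields $C_{0,E}(\mc{N}_1\otimes\mc{N}_2) = 0$. Applying \eqref{eq:rel-C-Q} once more to the channel $\mc{N}_1\otimes\mc{N}_2$ gives $Q_{0,E}(\mc{N}_1\otimes\mc{N}_2) = \tfrac12 C_{0,E}(\mc{N}_1\otimes\mc{N}_2) = 0$, which is the claim.

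The main obstacle is the status of \eqref{eq:rel-C-Q}, which the paper asserts rather than proves. If I need to justify it, I would check both directions. For $Q_{0,E}\ge \tfrac12 C_{0,E}$, teleportation converts $2k$ perfectly transmitted classical bits plus extra shared ebits into $k$ perfect qubits. For $C_{0,E}\ge 2Q_{0,E}$, superdense coding converts $k$ perfect qubits into $2k$ perfect bits. Both conversions are exact, so zero error is preserved, and the entanglement they consume is free in the assisted setting.

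As a fallback that avoids \eqref{eq:rel-C-Q}, I would use the inequality $Q^{(1)}_{0,E}(\Phi)\le C^{(1)}_{0,E}(\Phi)$, which holds because sending orthogonal basis states through a perfect quantum code gives a classical code. The argument then runs directly at the one-shot level. Hypothesis: $Q_{0,E}(\mc{N}_i)=0$ must imply $C_{0,E}(\mc{N}_i)=0$. Because this hypothesis is about vanishing capacity, the teleportation argument alone suffices for it.
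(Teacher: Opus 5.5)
Your proposal is correct and follows the paper's route: you translate $Q_{0,E}=0$ into $C_{0,E}=0$ via \eqref{eq:rel-C-Q}, apply \cref{thm:C-imp-sup}, and translate back. You also make explicit two things the paper leaves implicit, namely that superadditivity forces $C^{(1)}_{0,E}(\mc{N}_i^{\otimes n})=0$ for all $n$ so that the one-shot part of \cref{thm:C-imp-sup} can be applied to $\mc{N}_1^{\otimes n},\mc{N}_2^{\otimes n}$ (which \cref{thm:1-shot-superactivation-to-asymptotic} alone does not cover, since it only treats powers of a single channel), and that only the equivalence $Q_{0,E}=0 \Leftrightarrow C_{0,E}=0$ is really needed.
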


\begin{proof}
    Since $Q_{0,E}(\mc{N}_1) = Q_{0,E}(\mc{N}_2) = 0$, then by \cref{eq:rel-C-Q} $C_{0,E}(\mc{N}_1) = C_{0,E}(\mc{N}_2) = 0.$ Now by \cref{thm:C-imp-sup}, and the relation in \cref{eq:rel-C-Q}, the claim follows.
\end{proof}

\begin{remark}
Interestingly, the relation between the one-shot cases is more subtle and given by
\begin{equation}\label{eq:1-shot-C-Q}
Q_{0, E}^{(1)}(\mc{N}) = \log \floor{2^{C_{0,E}^{(1)}(\mc{N})/2}}.
\end{equation}
To see this, note that a perfect $d$-dimensional quantum code combined with superdense coding and an additional maximally entangled state transmits $d^2$ classical messages, and hence the maximum number of entanglement-assisted zero-error classical messages is lower bounded by $d^2.$ Now, a classical code containing at least $d^2$ messages transmits the $d^2$ possible teleportation keys, and hence transmits a $d$-dimensional quantum state. We also note that the general idea behind the relation still originates from the reverse Shannon theorem.

\begin{remark}
    Incidentally, $Q^{(1)}_{0,E}$, the one-shot entanglement-assisted zero-error quantum capacity can be superactivated. Consider the completely dephasing channel denoted by $\Delta$, whose action is given by $\Delta(\rho) = \sum\limits_{x = 0}^1 \braket{x}{\rho}{x} \ketbra{x}.$ This is a noiseless one-bit classical channel. By \cite{DSW13} $C_{0,E}^{(1)}(\Delta) = 1$, and hence by \cref{eq:1-shot-C-Q}, $Q_{0,E}^{(1)}(\Delta) = 0.$ Now, $2$ uses of $\Delta$ transmit $2$ classical bits perfectly. Combined with a preshared maximally entangled state, the $2$ bits can perform teleporation to achieve $Q_{0,E}^{(1)}(\Delta^{\otimes 2}) \geq 1.$ Therefore, $Q_{0,E}^{(1)}(\Delta) = 0,$ but $Q_{0,E}^{(1)}(\Delta^{\otimes 2}) \geq 1$ showing that superactivation is possible.
\end{remark}

\end{remark}

\subsection{Impossibility of superactivated cheating in multi-round protocols}

Here, we will show that superactivation cannot happen for the entangled-value of multi-round games or QMIP protocols when the number of messages exchanged is small. The particular class of protocols for which our impossibility result holds is cryptographically relevant as it models a security game for various primitives such as commitments and interactive arguments (see more in, for example, \cite{BQSY24}). First, we define the QMIP protocol.

\begin{definition}[3-message QMIP protocol] \label{def:3-mes}
    Let $K$ be a QMIP protocol with $V, A, B$ denoting the verifier and the two players Alice and Bob. Let $\rho_{AB}$ be an entangled state preshared by $A$ and $B.$ Let $\mc{N}$ be a quantum channel. The game is played as follows:
    \begin{enumerate}
        \item $A$ and $B$ send registers $A_1$ and $B_1$ to $V.$
        \item $V$ applies $\mc{N}$ to $A_1$ and $B_1$ and sends the respective outputs $A_2$ and $B_2$ to the players. $V$ keeps the reference register $R$.
        \item $A$ and $B$ apply channels $\mc{E}_A$ and $\mc{E}_B$ to $A_2$ and $B_2$ respectively to obtain outputs $A_3$ and $B_3.$
        \item The players send $A_3$ and $B_3$ to $V$.
        \item $V$ performs the two-outcome measurement $\{\Pi_{\mathrm{acc}}, I - \Pi_{\mathrm{acc}}\}$ on the joint system $A_3B_3R$. The players win if the outcome corresponding to $\Pi_{\mathrm{acc}}$ occurs.
    \end{enumerate}
    An entangled strategy includes the shared state $\rho_{AB}$ and channels $\mc{E}_A$ and $\mc{E}_B$ while the local strategy only includes the channels. We denote by $\omega(K),$ the local value and by $\omega^{\ast}(K)$ the entangled value.
\end{definition}

\begin{theorem}[Impossibility of superactivated cheating for entangled value of QMIP protocol] \label{thm:mult-imposs}
    For all 3-message QMIP protocols $K, K'$ as in \cref{def:3-mes}, not necessarily distinct, $$\omega^{\ast}(K \times K') \leq \omega^{\ast}(K).$$    
\end{theorem}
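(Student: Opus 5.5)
We follow the simulation argument of \cref{thm:imposs-q-game}. Starting from any entangled strategy for $K \times K'$, we build an entangled strategy for $K$ that wins with at least the same probability. The extra resources the players need from the second copy are absorbed into their preshared entanglement. Fix an entangled strategy $S$ for $K\times K'$: a shared state $\rho_{AB}$ (which we purify and enlarge freely), first-message registers $A_1A_1'$ and $B_1B_1'$ prepared from it, and second-turn channels $\mc{E}_A$ acting on $A_2A_2'$ together with Alice's private memory, and similarly $\mc{E}_B$ for Bob. The verifier applies $\mc{N}\otimes\mc{N}'$, keeps $RR'$, and measures $\Pi_{\mathrm{acc}}\otimes\Pi'_{\mathrm{acc}}$ on $A_3A_3'B_3B_3'RR'$.

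The new strategy for $K$ works as follows. The players' first message consists only of the primed-free registers $A_1,B_1$. They then need to simulate the verifier's action $\mc{N}'$ of the second copy. The natural difficulty is that $\mc{N}'$ acts jointly on $A_1'B_1'$ and produces $A_2'B_2'R'$, so neither player can apply it locally. The fix is that everything the verifier does in copy two before receiving second-turn answers is a fixed state preparation. Indeed, $A_1'B_1'$ are produced by the players from the shared state, with no dependence on any question, because in \cref{def:3-mes} the first message precedes any input. So consider the global state $\tau := (\mathrm{id}\otimes\mc{N}')(\rho_{A_1B_1A_1'B_1'\,\mathrm{mem}})$, taken after the players have prepared their first messages. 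This is a fixed state on $A_1B_1\,A_2'B_2'R'\,\mathrm{mem}_A\mathrm{mem}_B$. Let the preshared state of the new strategy be a purification of $\tau$ in which Alice holds $A_1,A_2',\mathrm{mem}_A$, Bob holds $B_1,B_2',\mathrm{mem}_B$, and the reference $R'$ together with the purifying system is discarded, for instance given to Alice as junk. Since $R'$ is only traced over, any assignment of it works. Alice sends $A_1$ and Bob sends $B_1$. The verifier applies $\mc{N}$, so the joint state on $A_2B_2R\,A_2'B_2'R'\,\mathrm{mem}$ equals exactly the one arising in $K\times K'$, because $\mc{N}\otimes\mc{N}'=(\mc{N}\otimes\mathrm{id})\circ(\mathrm{id}\otimes\mc{N}')$. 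Alice then applies $\mc{E}_A$ to $A_2A_2'\mathrm{mem}_A$ and returns only $A_3$, discarding $A_3'$. Bob does the same.

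Under this strategy the final state on $A_3B_3R$ is the marginal of the state $\sigma$ on $A_3A_3'B_3B_3'RR'$ produced by $S$. Its success probability is therefore $\Tr[(\Pi_{\mathrm{acc}}\otimes I)\sigma]\ge \Tr[(\Pi_{\mathrm{acc}}\otimes\Pi'_{\mathrm{acc}})\sigma]=\omega(K\times K';S)$, using the operator inequality $\Pi_{\mathrm{acc}}\otimes\Pi'_{\mathrm{acc}}\preceq \Pi_{\mathrm{acc}}\otimes I$ exactly as in \cref{thm:imposs-q-game}. Taking the supremum over $S$ gives $\omega^\ast(K\times K')\le\omega^\ast(K)$.

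The main obstacle is justifying that the copy-two state $\tau$ is strategy-independent of the $K$-copy interaction. This holds only because there are exactly 3 messages: the provers speak first with no input, and the verifier's single action $\mc{N}'$ precedes the only adaptive prover move. I would spell out that the order of $\mc{N}$ and $\mc{N}'$ is irrelevant since they act on disjoint registers, and that the players' first messages are a fixed function of the shared state. This is exactly what breaks for 4 or more messages, where a verifier question would have to precede the players' copy-two preparation.
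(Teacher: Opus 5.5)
Your proposal is correct and follows essentially the paper's argument: absorb the second copy's verifier action $\mc{N}'$ into the preshared state, run the same second-turn channels, discard the primed outputs, and use $\Pi_{\mathrm{acc}}\otimes\Pi'_{\mathrm{acc}}\preceq\Pi_{\mathrm{acc}}\otimes I$. You are somewhat more careful than the paper, since you explicitly handle the reference register $R'$, the players' preparation of their first messages, and purification of the preshared state.
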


\begin{proof}
    Consider a strategy for $K \times K'$ consisting of $(\rho_{AA'BB'}, \mc{E}_{AA'}, \mc{E}_{BB'}).$ To construct a strategy for the single instance $K$, the players could choose their shared state to be $\rho_{AA_2'BB_2'} = \mc{N}'(\rho_{AA'BB'}).$ To complete the strategy, they could take their respective channels to be $\mc{E}_{AA'}$ and $\mc{E}_{BB'}.$ Now when the verifier applies $\mc{N}$, the shared state becomes $\rho_{A_2A_2'B_2B_2'}.$ This is the same as the shared state after the first round in the protocol $K \times K',$ which is when $A$ and $B$ apply channels $\mc{E}_{AA'}$ and $\mc{E}_{BB'}.$ After the final message, $V$ holds $A_3B_3R.$ 
    This is the marginal corresponding to $K$ of the same state that $V$ would hold at the end of the protocol $K \times K'$. The probability that $V$ accepts using $\Pi_{\mathrm{acc}}$ on this state is an upper bound on the probability of $V$ accepting using $\Pi_{\mathrm{acc}} \otimes \Pi_{\mathrm{acc}}'$ on this state at the end of the protocol $K \times K'.$ Since this holds for all strategies, it completes the proof.
\end{proof}

\bibliographystyle{bibtex/bst/alphaarxiv.bst}
\bibliography{bibtex/bib/quasar-full.bib,
              bibtex/bib/quasar.bib,
              bibtex/bib/quasar-more.bib}
              
\end{document}